\newcommand{\new}{}
\newcommand{\TT}[1]{\mathcal{T}^{#1}} 
\newcommand{\dd}[1]{\mathrm{d}^{#1}} 
\newcommand{\dv}{\dd{}_{\V}} 
\newcommand{\Cf}{C^\infty} 
\newcommand{\X}{\vec{\mathcal{X}}} 
\newcommand{\RR}[1]{\mathcal{R}_{#1}} 
\newcommand{\Ann}{\operatorname{Ann}} 
\newcommand{\eps}[1]{\mathcal{E}_{#1}} 
\newcommand{\ddt}{\frac{\dd{}}{\dd {}t}}
\newcommand{\PP}{P} 
\newcommand{\bundle}{\sigma} 
\newcommand{\anchor}{\rho} 
\newcommand{\dual}[1]{{#1}^\ast} 
\newcommand{\tangent}[1]{ \tau_{#1}} 
\newcommand{\cotangent}[1]{\tau_{#1}^\ast} 
\newcommand{\vertical}[1]{\upsilon_{#1}} 
\newcommand{\fibration}{\pi} 
\newcommand{\lift}[2]{\TT{#2}{#1}} 
\newcommand{\liftanch}[2]{#1^{#2}} 
\newcommand{\down}[2]{#1_{#2}} 
\newcommand{\morphlift}[2]{\mathcal{T}^{#1}{#2}} 
\newcommand{\I}[1]{i_{#1}} 
\newcommand{\EL}{E_L} 
\newcommand{\V}{\mathrm{V}}
\DeclareMathAlphabet{\mathpzc}{OT1}{pzc}{m}{it}
\newcommand{\R}{\mathbb{R}}
\newcommand{\Sec}{\operatorname{Sec}}
\newcommand{\id}{\operatorname{id}}
\newcommand{\wt}[1]{\widetilde{#1}}
\newcommand{\T}{\mathrm{T}}
\newcommand{\lra}{\longrightarrow}
\newcommand{\ra}{\rightarrow}
\newcommand{\gra}{\operatorname{graph}}
\def\<#1>{\big\langle #1\big\rangle}
\def\(#1){\left( #1\right)}
\numberwithin{equation}{section} 
\theoremstyle{plain} 
\newtheorem{theorem}{Theorem}[section]
\newtheorem{proposition}[theorem]{Proposition}
\newtheorem{lemma}[theorem]{Lemma}
\newtheorem{lemma_def}[theorem]{Lemma/Definition}
\theoremstyle{definition}
\newtheorem{definition}[theorem]{Definition}
\theoremstyle{remark}
\newtheorem{remark}[theorem]{Remark}
\newtheorem{corollary}[theorem]{Corollary}
\title{Prolongations vs. Tulczyjew triples in Geometric Mechanics\footnote{This research was supported by the Polish National Science Center (grant DEC-2012/06/A/ST1/00256).}}
\author{Michał Jóźwikowski\footnote{Email: \texttt{mjoz(at)impan.pl}} \\[0.5cm]
\emph{Institute of Mathematics}\\
\emph{Polish Academy of Sciences}
}
\date{\today}
\begin{document}
\maketitle
\begin{abstract}
In the scientific literature there are basically two schools of formulating Lagrangian (or Hamiltonian) mechanics in the (Lie) algebroid setting: in terms of prolongations and in terms of Tulczyjew triples. Despite the fact that in both approaches we describe the same phenomena, so far no comparison between prolongations and Tulczyjew triples was made. In this note we aim to fill this gap. More precisely, we will strip the prolongation approach to uncover the Tulczyjew triple reality hidden inside, thus proving that the latter approach is a more basic one.  
\end{abstract}
\paragraph*{Keywords: Lie algebroid, prolongation, geometric mechanics, Tulczyjew triple, double vector bundle}
\paragraph*{MSC 2010:} 70G45, 53D17, 53Z05

\section{Introduction}

\paragraph{A bit of history}
The career of algebroids in mechanics was originated by Weinstein \cite{Weinstein_1996} (see also \cite{Liberman_1996}) who posted a problem of formulating Analytic Mechanics in the language of Lie algebroids. The first solution was due to Mart\'{i}nez \cite{Martinez_1999,Martinez_2001} who adopted much earlier ideas of J. Klein \cite{Klein_1962} by using \emph{prolongations} of Lie algebroids introduced in \cite{Higgins_Mackenzie_1990} (under the name induced or inverse-image algebroids). This point of view soon spread among researchers working in Geometric Mechanics which resulted in translating almost every aspect of the classical theory (vakonomic and nonholonomic constraints, Hamilton-Jacobi theory, controls, higher order systems, etc.) into this new geometric setting (see \cite{Martinez_2007, Cortes_de_Leon_Marrero_Martinez_2009, Iglesias_Marrero_Diego_Sosa_2008, Colombo_2017} to name just a few references). Another approach to original Weinstein's problem, which can be traced back to the ideas of Tulczyjew \cite{Tulczyjew_1976a,Tulczyjew_1976b}, appeared in \cite{GGU_2006}. Is makes use of the structure called a \emph{Tulczyjew triple}, and also has been widely adopted in various aspects of mechanics and field theory -- see \cite{GG_2013}.

\paragraph{Two approaches to Geometric Mechanics}
As mentioned above, there are basically two schools of formulating the Hamiltonian and the Lagrangian dynamics on algebroids. In the \emph{Tulczyjew triple approach} one encodes the algebroid structure on a vector bundle $\bundle:E\ra M$ as canonical double vector bundle morphisms $\wt{\Lambda_{E^\ast}}:\T^\ast E^\ast\ra\T E^\ast$ and $\eps E:\T^\ast E\ra \T E^\ast$. These, together with the differential of a proper generating function $\dd{} H:E^\ast\ra \T^\ast E^\ast$ (in the Hamiltonian case) or $\dd{} L:E\ra\T^\ast E$ (in the Lagrangian case), are used to generate the phase dynamics, i.e a differential equation for a curve of momenta -- see \eqref{diag:TT}.  

In contrast, the \emph{prolongation approach} makes use of the $E^\ast$- and $E$-prolongations of the initial algebroid, i.e. bundles $\lift{\bundle}{E^\ast}:\TT{E^\ast}E\ra E^\ast$ and $\lift{\bundle}{E}:\TT{E}E\ra E$, in the Hamiltonian and Lagrangian cases, respectively. In the first case there exists a canonical non-degenerate 2-form $\Omega_E\in\Sec(\Lambda^2(\TT{E^\ast}E)^\ast)$, which is used to formulate the equations of motion in the symplectic terms. In the Lagrangian case one pull-backs $\Omega_E$ via $\morphlift{\lambda_L}E:\TT EE\ra \TT{E^\ast}E$ -- the prolongation of the Legendre map $\lambda_L:E\ra E^\ast$ (i.e. the vertical derivative of the Lagrangian $L:E\ra\R$). The related dynamics is formulated as a presymplectic dynamics related with this pull-back $\omega_L:=(\morphlift{\lambda_L}E)^\ast\Omega_E\in\Sec(\Lambda^2(\TT E E)^\ast)$. Note that $\omega_L$ is non-canonical (it depends on $L$) and possibly degenerated. 

\paragraph{Motivations and purposes}
It seems that the two presented approaches are of different geometric nature. It may though come as a surprise that they actually lead to the same equations of motion. It is, however, much more surprising that so far no one explained why it is so! In fact, researchers working in the field usually favor one approach (depending on their background) and refer to the competitive one by checking if local expressions match in both cases. In this work \textbf{we aim to understand what is a geometric relation between the Tulczyjew triple and the prolongation approaches to mechanics on algebroids}. In other words, we want to \textbf{understand why the obtained equations are the same}, rather than just conclude that they do. As Geometric Mechanics is already well-established it seems that our work will have little impact on future research, yet we feel that this is a problem should be addressed for the consistency of the theory, but was somehow overlooked by the community. 

\paragraph{Our results}
Our solution to the problem formulated above is quite simple and requires only a basic understanding of the notion of a prolongation. Given a fibration $\fibration:\PP\ra M$, one constructs $\TT \PP E$ -- the $E$-tangent bundle of $\PP$,  which is a subject to a canonical embedding  $\I \PP:\TT \PP E\hookrightarrow E\times_M\T \PP$ as a vector subbundle over $\PP$. The dual of this embedding gives rise to a vector bundle morphisms $\dual{(\I \PP)}:E^\ast\times \T^\ast \PP\ra (\TT {\PP} E)^\ast$. Now in the construction of Hamiltonian equations in the prolongation approach we actually use the vector bundle isomorphism $(\wt{\Omega_E})^{-1}:(\TT{E^\ast} E)^\ast\ra \TT{E^\ast} E$ induced by the inverse of $\Omega_E$. This, in turn, gives rise to a map $\dual{(\I{E^\ast})}\circ(\wt{\Omega_E})^{-1}\circ\I{E^\ast}:E^\ast\times \T^\ast E^\ast\ra E\times \T E^\ast$ which has a simple interpretation in terms of the Hamiltonian part of the Tulczyjew triple $\wt{\Lambda_{E^\ast}}:\T^\ast E^\ast \ra \T E^\ast$ (Theorem~\ref{thm:hamilton}). Actually, spaces $E^\ast$ and $E$ in the domain and co-domain of the composition $\dual{(\I{E^\ast})}\circ(\wt{\Omega_E})^{-1}\circ\I{E^\ast}$ are not important from the point of view of mechanics and we are left with the bare DVB morphisms $\wt{\Lambda_{E^\ast}}$ (i.e., the Hamiltonian side of the Tulczyjew triple). 

The Lagrangian formalism in the prolongation approach is formulated in terms of a 2-form $\omega_L$ introduced above. With $\omega_L$ me may associate the related vector-bundle morphism $\wt{\omega_L}:\TT EE\ra (\TT E E)^\ast$ and the dynamics is generated by the differential of the energy function $\dd{}\EL:E\ra \T^\ast E$. The presence of the later can be explained as a naturally induced by the Legendre map $\lambda_L$ from the differential of the Lagrangian $\dd{} L:E\ra \T^\ast E$ (Lemma~\ref{lem:L_to_EL}). This observation, together with the above results about the Hamiltonian formalism, allows to recognize the Lagrangian side of the Tulczyjew triple $\eps E:\T^\ast E\ra\T E^\ast$ hidden inside the formalism (Theorem \ref{thm:lagrangian}).   

A rough summary of the above results would be the following. The Tulczyjew triple approach is hidden inside the prolongation approach. It can be recovered by restricting and projecting maps $\wt{\Omega_E}$ and $\wt{\omega_L}$ to the proper ``legs'' of the prolongations $\TT{E^\ast}E$ and $\TT{E} E$. This proves that the Tulczyjew triple approach is a more fundamental one -- see discussion in Section~\ref{sec:concl}.   

\paragraph{Final remarks}
After introduction of basic concepts and definitions in Subsection~\ref{ssec:preliminaries}, we give a brief description of the Tulczyjew triple approach in Section~\ref{sec:TT} based mostly on \cite{GGU_2006}.
The treatment of prolongations and the related approach to geometric mechanics in Section~\ref{sec:comparison} is based in principle on \cite{CLM_2006}, yet at some points of the presentation, where we think that the theory can be simplified, we decided to detour from this survey. We claim originality for Lemma~\ref{lem:prolong} (it seems to us that such a characterization of the algebroid structure on the prolongation was not present in the literature), Theorem~\ref{thm:hamilton} (although the coordinate formula of $\wt{\Omega_E}$ appeared in \cite{LMM_2005}), Theorem~\ref{thm:lagrangian}, and an alternative formulation of mechanics in terms of the prolongation Tulczyjew triple in  Subsection~\ref{ssec:prolong_tt} (the triple itself was, however, known to \cite{LMM_2005}). 
 
To keep this work concise we omitted proofs if they can be found in the literature, discussing examples and giving coordinate descriptions. The quoted references will provide plenty of each for an interested reader.

\subsection{Preliminaries: double vector bundles and algebroids}
\label{ssec:preliminaries}

\paragraph{Basic geometric objects}
In the course of our considerations we will be working with a vector bundle (\emph{VB}, in short) $\bundle:E\ra M$ and its dual $\dual{\bundle}:E^\ast \ra M$. An application of the tangent $\T$ or the cotangent $\T^\ast$ functor to either of these makes them examples of \emph{double vector bundles} (\emph{DVB}s, in short). We will not need any specific properties of such objects (an interested reader should consult \cite{Konieczna_Urbanski_1999,Grabowski_Rotkiewicz_2009}) -- for our purposes it is enough to know that a DVB is a pair of naturally compatible VB structures sharing the same total space. For instance, on $\T E$ the related structures are: the tangent bundle structure $\tangent E:\T E\ra E$ and the tangent lift of the initial VB structure $\T\bundle:\T E\ra \T M$. In the case of $\T^\ast E$ (and similarly $\T^\ast E^\ast$) one of these two VB structures is the cotangent fibration $\cotangent E:\T^\ast E\ra E$. The second one, denoted $\vertical E:\T^\ast E\ra E^\ast$, seems not so obvious at first. We may interpret it as the \emph{vertical derivative}, namely if $f:E\ra \R$ is a smooth function representing a covector $\dd{}f(e)\in \T^\ast_e E$, then its image under $\vertical E$ evaluated on an element $e'\in E_{\tau(e)}$ is the derivative of $f$ in the direction of a vertical vector $\V_ee'\in\T_eE$. That is,   
$$\<\vertical E(\dd{}f(e)),e'>=\V_ee'(f)\overset{df}=\frac{\dd{}}{\dd{}t}\Big|_{t=0} f(e+te')\ .$$
For convenience we shall denote the vertical (fiber-wise) derivative by $\dv f:=\vertical E(\dd{}f)$.

\paragraph{The canonical DVB isomorphism}
Interestingly, spaces $\T^\ast E$ and $\T^\ast E^\ast$ are canonically isomorphic as DVBs. {\new The construction of this isomorphism, denoted $\RR E:\T^\ast E\ra \T^\ast E^\ast$, is based on the following construction of a Lagrangian submanifold on the cotangent bundle. Let $Q$ be a smooth manifold, $C\subset Q$ a smooth submanifold, and $f:C\ra \R$ a smooth function. Define the set
$$C_f:=\{\theta\in \T^\ast Q\ |\ \cotangent{Q}(\theta)\in C,\ \theta\big|_{\T C}=\dd{}f \}\ , $$
consisting of all covectors in $\T^\ast Q$ which behave as $\dd{} f$ when restricted to $\T C$. The restriction of the cotangent projection $\pi:=\cotangent{Q}\big|_{C_f}:C_f\ra C$ makes $C_f$ an affine bundle over $C$ modeled on the annihilator subbundle $\Ann(\T C)\subset \T^\ast Q\big|_{C}$. It is well-known that $C_f$ is a Lagrangian submanifold in $\T^\ast Q$ (note that the tautological 1-form $\theta_Q\in\Omega^1(\T^\ast Q)$ restricted to $C_f$ is simply $\theta_Q\big|_{C_f}=\pi^\ast\dd{} f$).    

Now consider the above construction of $C_f$ for $Q:=E\times E^\ast$, $C:=E\times_M E^\ast$ and $f:=\<\cdot,\cdot>:E\times_M E^\ast\ra \R$ 
being the canonical pairing. We define the graph of $\RR E$ to be the related Lagrangian submanifold, i.e. $\gra(\RR E):=C_f\subset\T^\ast (E\times E^\ast)\simeq \T^\ast E\times\T^\ast E^\ast$.}
We conclude that $\RR{E}$ is  also an anti-symplectomorphism. In the course of this paper we will use the fact that $\RR E$ intertwines the ``legs'' of DVBs $\T^\ast E$ and $\T^\ast E^\ast$, i.e.
\begin{equation}
\label{eqn:R_E_legs}
\vertical{E^\ast}\circ\RR E=\cotangent{E}:\T^\ast E\ra E\quad\text{and}\quad \cotangent{E^\ast}\circ\RR E=\vertical E:\T^\ast E\ra E^\ast\ .
\end{equation}

\paragraph{Almost-Lie algebroids} 
The fundamental definition of this note is the following one.
\begin{definition}
\label{def:algebroid}
An \emph{almost-Lie algebroid} (\emph{AL algebroid}, in short) structure on a VB $\bundle:E\ra M$ is constituted by a VB map $\anchor:E\ra \T M$ over $\id_M$ (the \emph{anchor}) and a skew-symmetric $\R$-bilinear operator $[\cdot,\cdot]:\Sec(E)\times\Sec(E)\ra\Sec(E)$ (the \emph{bracket}) satisfying the following \emph{Leibniz rule}
\begin{equation}
\label{eqn:leibniz}
[e,f\cdot e']=f[e,e']+\anchor(e)f\cdot e'\ ,
\end{equation}
and the condition of \emph{compatibility between the anchor and the bracket}
\begin{equation}
\label{eqn:AL}
\anchor[e,e']=[\anchor(e),\anchor(e')]_{\T M}
\end{equation}
for any $e,e'\in \Sec(E)$ and $f\in\Cf(M)$. Here $[\cdot,\cdot]_{\T M}$ denotes the standard Lie bracket of vector fields on $M$. 
\end{definition}
Typically in geometric mechanics one works with \emph{Lie algebroids}, which are AL algebroids with the bracket additionally satisfying the \emph{Jacobi identity}. In our considerations the Jacobi identity will play no role, so we chose to work in greater generality. However, we cannot get rid of axiom \eqref{eqn:AL}, which is necessary to equip the prolongation of an algebroid with the canonical algebroid structure (see Lemma~\ref{lem:prolong}). On the other hand, more general classes of algebroid structures were introduced in \cite{GU_1999}.

\paragraph{Algebroids as derivations}
In the prolongation approach to geometric mechanics another characterization of an algebroid structure is frequently used.

\begin{proposition}
\label{prop:alg_dif} 
An AL algebroid structure $(\bundle:E\ra M,\anchor,[\cdot, \cdot])$ can be equivalently characterized as a degree-one derivation $\dd E$ in the graded algebra of $E$-forms $(\Sec(\Lambda^\bullet(E^\ast)),\wedge)$. It is determined by its action on zero and one $E$-forms by the following generalization of the Cartan formula 
\begin{equation}
\label{eqn:cartan_formula}
\dd Ef(e):=\anchor(e)f\quad\text{and}\quad \dd E\xi(e,e'):=\anchor(e)\<\xi,e'>-\anchor(e')\<\xi,e>-\<\xi,[e,e']>\ .
\end{equation}
Here $f\in\Cf(M)$, $e,e'\in\Sec(E)$, and $\xi\in \Sec(E^\ast)$. 
\end{proposition}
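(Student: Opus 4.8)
The plan is to prove the equivalence in both directions, the passage being in each case governed by the generalized Cartan formula. Given an AL algebroid $(\anchor,[\cdot,\cdot])$ I would \emph{define} $\dd E$ on an arbitrary $k$-form $\omega\in\Sec(\Lambda^k(E^\ast))$ by the Chevalley--Eilenberg (Koszul) prescription
\begin{equation*}
\begin{split}
\dd E\omega(e_0,\dots,e_k)={}&\sum_{i}(-1)^i\anchor(e_i)\,\omega(e_0,\dots,\widehat{e_i},\dots,e_k)\\
&+\sum_{i<j}(-1)^{i+j}\omega\big([e_i,e_j],e_0,\dots,\widehat{e_i},\dots,\widehat{e_j},\dots,e_k\big)\ ,
\end{split}
\end{equation*}
which for $k=0$ and $k=1$ reduces exactly to \eqref{eqn:cartan_formula}. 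The first point to settle is that the right-hand side is $\Cf(M)$-linear in every argument, so that $\dd E\omega$ is again an $E$-form: replacing $e_i$ by $f\cdot e_i$ produces, through the Leibniz rule \eqref{eqn:leibniz} for the bracket and the ordinary product rule for the vector field $\anchor(e_j)$, precisely the cross-terms needed for the cancellation, exactly as in the classical de Rham computation. Next one checks that $\dd E$ is a graded derivation of degree one, $\dd E(\omega\wedge\eta)=\dd E\omega\wedge\eta+(-1)^{|\omega|}\omega\wedge\dd E\eta$. This is a purely combinatorial identity that uses only skew-symmetry of the bracket and the Leibniz rule -- no Jacobi-type or compatibility hypothesis enters -- and by multiplicativity it suffices to verify it when $\omega$ and $\eta$ are functions or one-forms. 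Finally, uniqueness of $\dd E$ subject to \eqref{eqn:cartan_formula} is immediate, since $\Sec(\Lambda^\bullet(E^\ast))$ is locally generated as an algebra by $\Cf(M)$ and $\Sec(E^\ast)$, and a derivation is determined by its values on a generating set.

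For the converse I would start from a degree-one derivation $\dd E$ -- so $\dd E$ sends $\Sec(\Lambda^k(E^\ast))$ into $\Sec(\Lambda^{k+1}(E^\ast))$ -- and recover $\anchor$ and $[\cdot,\cdot]$ by reading \eqref{eqn:cartan_formula} backwards: put $\anchor(e)f:=\dd Ef(e)$ and, for $\xi\in\Sec(E^\ast)$, define $[e,e']\in\Sec(E)$ through $\<\xi,[e,e']>:=\anchor(e)\<\xi,e'>-\anchor(e')\<\xi,e>-\dd E\xi(e,e')$. A sequence of short checks then remains: that $\anchor$ is a bona fide VB morphism over $\id_M$ (automatic, since $\dd Ef\in\Sec(E^\ast)$ forces $\Cf(M)$-linearity in $e$, while the derivation property of $\dd E$ on functions makes each $\anchor(e)$ a vector field); that the defining expression for $[e,e']$ is $\Cf(M)$-linear in $\xi$ and hence really determines a section of $E$ (here one uses $\dd E(g\xi)=\dd Eg\wedge\xi+g\,\dd E\xi$ together with the fact that $\dd E\xi$ is an alternating $2$-form); that the bracket is skew-symmetric and $\R$-bilinear; and that the Leibniz rule \eqref{eqn:leibniz} holds, again as a direct consequence of the derivation property. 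The one axiom that is not a formal consequence of these is the compatibility \eqref{eqn:AL}: a short computation with the definitions gives $\dd E(\dd Ef)(e,e')=[\anchor(e),\anchor(e')]_{\T M}f-\anchor([e,e'])f$, so \eqref{eqn:AL} is precisely the statement that $\dd E\circ\dd E$ vanishes on degree-zero forms, and this is the extra condition one must retain in the converse for the correspondence to land in the class of almost-Lie algebroids (whereas $\dd E\circ\dd E=0$ on one-forms would encode the Jacobi identity, which is not assumed here).

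I expect the principal obstacle to be the combinatorial verification of the graded-derivation property of the Koszul formula in the forward direction -- the only genuinely lengthy computation in the whole argument -- together with the care needed, in the converse, to isolate which property of $\dd E$ encodes axiom \eqref{eqn:AL}. Everything else reduces to routine manipulations with the Leibniz rule and $\Cf(M)$-multilinearity. Since the result is classical -- it is the Chevalley--Eilenberg description of algebroids, cf. \cite{GU_1999} -- in the write-up it would be natural to record only the Koszul formula for $\dd E$, remark that \eqref{eqn:cartan_formula} is its restriction to degrees zero and one, and refer to the cited literature for the routine verifications.
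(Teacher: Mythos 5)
Your argument is correct and is precisely the standard Chevalley--Eilenberg/Koszul construction that the paper itself relies on: the paper gives no proof of Proposition~\ref{prop:alg_dif}, deferring to the literature (cf.\ \cite{GU_1999}), and your sketch fills in exactly the routine verifications meant there. Your point that the compatibility axiom \eqref{eqn:AL} is not a formal consequence of the derivation property but is equivalent to $\dd E\dd E f=0$ for all $f\in\Cf(M)$ is the right way to read the (slightly loosely stated) equivalence, and it matches the remark the paper makes immediately after the proposition.
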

It is well-known that a Lie algebroid structure corresponds to the case of a homological derivation, i.e. $\dd E\dd E=0$. On the other hand, axiom \eqref{eqn:AL} is equivalent to a condition that $\dd E\dd E f\equiv 0$ for every function $f\in\Cf(M)$. 

The above characterization allows for an elegant definition of a morphism in the category of algebroids.
\begin{definition}
\label{def:algebroid_morphism}
Let $(\bundle:E\ra M,\anchor,[\cdot,\cdot])$ and $(\bundle':E'\ra M',\anchor',[\cdot,\cdot]')$ be two AL algebroids. We say that a VB map $\psi:E\ra E'$ is an \emph{algebroid morphism} if the induced map $\psi^\ast:\Sec(\Lambda^\bullet(E')^\ast)\ra\Sec(\Lambda^\bullet(E^\ast))$ commutes with the derivations $\dd E$ and $\dd{E'}$, i.e.
$$\psi^\ast(\dd{E'}\Xi)=\dd E(\psi^\ast \Xi)\ ,$$
for every $\Xi\in\Sec(\Lambda^k(E')^\ast)$, $k=0,1,2,\hdots$. Actually, it suffices to check this condition only for $k=0,1$. 
\end{definition}
In particular, axiom \eqref{eqn:AL} can be viewed as the condition that the anchor $\anchor$ is an algebroid morphism between $\bundle:E\ra M$ endowed with the original algebroid structure and $\tangent M:\T M \ra M$ with the canonical tangent bundle algebroid structure. 

\paragraph{Algebroids as linear bi-vector fields} 
On the other hand, in the Tulczyjew triple approach to mechanics the following point of view is useful.
\begin{proposition}[\cite{GU_1999}]
\label{prop:alg_bi_vect}
An AL algebroid structure $(\bundle:E\ra M,\anchor,[\cdot, \cdot])$ can be equivalently characterized as a bi-vector field $\Lambda_{E^\ast}\in\frak{X}^2(E^\ast)$. It is determined by the following formulas 
\begin{equation}
\label{eqn:lambda_formula}
\Lambda_{E^\ast}(\dd{}\iota_e,(\dual{\bundle})^\ast\dd{} f):=(\dual{\bundle})^\ast(\anchor(e)f)\quad\text{and}\quad \Lambda_{E^\ast}(\dd{}\iota_e,\dd{}\iota_{e'}):=\iota_{[e,e']}\ .
\end{equation}
Here $\iota_e$, $\iota_{e'}$ and $\iota_{[e,e']}$ are linear functions on $E^\ast$ induced by sections $e,e',[e,e']\in\Sec(E)$, respectively (i.e. $\iota_e(\xi)=\<\xi,e>$); and $f\in\Cf(M)$ is an arbitrary smooth function on $M$.
\end{proposition}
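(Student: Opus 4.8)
The statement asserts a bijection between AL algebroid structures on $\bundle:E\ra M$ and \emph{linear} bi-vector fields on $E^\ast$, implemented by the formulas \eqref{eqn:lambda_formula}. The plan is to regard a bi-vector field $\Lambda_{E^\ast}\in\frak{X}^2(E^\ast)$ as a skew-symmetric $\R$-bilinear bracket $\{\cdot,\cdot\}$ on $\Cf(E^\ast)$ which is a derivation in each argument (no Jacobi identity enters, since we only want a bi-vector, not a Poisson, structure). Such a bracket is a tensorial object, hence pointwise-determined, and therefore completely fixed by its values on pairs of differentials $\dd{}\phi$ with $\phi$ ranging over any family of functions whose differentials span $\T^\ast_\xi E^\ast$ at every $\xi\in E^\ast$. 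On $E^\ast$ the linear functions $\iota_e$, $e\in\Sec(E)$, together with the basic functions $(\dual{\bundle})^\ast f$, $f\in\Cf(M)$, form such a family, and \emph{linearity} of the bi-vector field means exactly that $\{\iota_e,\iota_{e'}\}$ is again linear, $\{\iota_e,(\dual{\bundle})^\ast f\}$ is basic, and $\{(\dual{\bundle})^\ast f,(\dual{\bundle})^\ast g\}=0$.

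First I would do the direction (linear bi-vector $\Rightarrow$ AL algebroid). Given such a $\Lambda_{E^\ast}$, set $\iota_{[e,e']}:=\{\iota_e,\iota_{e'}\}$ and $(\dual{\bundle})^\ast(\anchor(e)f):=\{\iota_e,(\dual{\bundle})^\ast f\}$. One checks that $[\cdot,\cdot]$ is $\R$-bilinear and skew (inherited from $\{\cdot,\cdot\}$); that for fixed $e$ the assignment $f\mapsto\anchor(e)f$ is a derivation of $\Cf(M)$, i.e. a vector field on $M$; that $\anchor$ is $\Cf(M)$-linear, hence a genuine VB map — here one expands $\{\iota_{ge},(\dual{\bundle})^\ast f\}$ via $\iota_{ge}=(\dual{\bundle})^\ast g\cdot\iota_e$ and uses $\{(\dual{\bundle})^\ast g,(\dual{\bundle})^\ast f\}=0$; and that the Leibniz rule \eqref{eqn:leibniz} holds, which falls out of expanding $\{\iota_e,\iota_{fe'}\}$ the same way. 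Note that axiom \eqref{eqn:AL} plays no role in this proposition; it corresponds to the vanishing of the ``purely basic'' part of $[\Lambda_{E^\ast},\Lambda_{E^\ast}]$, while the full Jacobi identity is the vanishing of the whole Schouten bracket of $\Lambda_{E^\ast}$ with itself.

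For the converse, given $(\anchor,[\cdot,\cdot])$ I would define $\{\cdot,\cdot\}$ on the generating family by \eqref{eqn:lambda_formula} together with $\{(\dual{\bundle})^\ast f,(\dual{\bundle})^\ast g\}:=0$, and extend by skew-symmetry and the bi-derivation rule. The only real content — and the step I expect to be the main obstacle — is to verify that this is \emph{well defined}: the generators $\iota_e$ and $(\dual{\bundle})^\ast f$ are not free, they satisfy $\iota_{e+e'}=\iota_e+\iota_{e'}$, $\iota_{fe}=(\dual{\bundle})^\ast f\cdot\iota_e$ and $(\dual{\bundle})^\ast(fg)=(\dual{\bundle})^\ast f\cdot(\dual{\bundle})^\ast g$, and the prescribed bracket must respect all of them. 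This boils down to three compatibilities: against $\iota_{fe}=(\dual{\bundle})^\ast f\cdot\iota_e$ (in the $\iota$-slot) one needs $[fe,e']=f[e,e']-(\anchor(e')f)\,e$, which is precisely skew-symmetry combined with \eqref{eqn:leibniz}; against the same relation with the second argument basic one needs $\anchor(fe)=f\,\anchor(e)$, i.e. that $\anchor$ is a VB map; and against $(\dual{\bundle})^\ast(fg)=(\dual{\bundle})^\ast f\cdot(\dual{\bundle})^\ast g$ one only uses that each $\anchor(e)$ is a derivation. Once well-definedness is in hand, a tensoriality/pointwise argument — the consistency relations guarantee that the prescribed values on $\{\dd{}\iota_e,\dd{}(\dual{\bundle})^\ast f\}$ assemble into a well-defined element of $\Lambda^2\T_\xi E^\ast$ at each $\xi$, depending smoothly on $\xi$ — promotes $\{\cdot,\cdot\}$ to a bi-vector field, linear by construction. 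Finally the two assignments are visibly mutually inverse, since each is pinned down by its effect on the generators $\iota_e$ and $(\dual{\bundle})^\ast f$.
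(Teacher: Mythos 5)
Your proposal is correct, but note that the paper itself offers no proof of this proposition: it is quoted from \cite{GU_1999} (the author explicitly omits proofs available in the literature), so there is no in-text argument to diverge from. What you wrote is essentially the standard proof: identify a bi-vector field with a skew-symmetric bi-derivation of $\Cf(E^\ast)$, observe that linearity means precisely that the bracket of two linear functions is linear, of a linear and a basic function is basic, and of two basic functions vanishes, and then read the anchor and the bracket off the generators $\iota_e$ and $(\dual{\bundle})^\ast f$. The substantive step, which you isolate correctly, is compatibility with the relations $\iota_{fe}=(\dual{\bundle})^\ast f\cdot\iota_e$ and $(\dual{\bundle})^\ast(fg)=(\dual{\bundle})^\ast f\cdot(\dual{\bundle})^\ast g$; this is exactly skew-symmetry plus the Leibniz rule \eqref{eqn:leibniz} and $\Cf(M)$-linearity of the anchor. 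To make the final ``pointwise assembly'' airtight, work in a local frame $e_i$ of $\Sec(E)$ and base coordinates $x^a$: the differentials of the functions $\iota_{e_i}$ and of the pulled-back coordinates $x^a$ frame $\T^\ast E^\ast$, so the structure functions define $\Lambda_{E^\ast}$ locally, and your consistency computations show that \eqref{eqn:lambda_formula} then holds for arbitrary $e$ and $f$, which forces the local pieces to agree. Your remark that axiom \eqref{eqn:AL} plays no role is also accurate and worth keeping: formulas \eqref{eqn:lambda_formula} literally set up a bijection between linear bi-vector fields and skew algebroids (anchor plus skew bracket satisfying \eqref{eqn:leibniz}), the almost-Lie condition being an extra property of such a bi-vector (equivalently, as the paper notes in the derivation picture, $\dd E\dd E f=0$ for basic $f$, while full Jacobi corresponds to $\Lambda_{E^\ast}$ being Poisson); for the use made of the proposition in this paper only the direction ``AL algebroid $\Rightarrow$ linear bi-vector via \eqref{eqn:lambda_formula}'' is actually needed.
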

Let us remark that the above bi-vector $\Lambda_{E^\ast}$ is \emph{linear} in the sense that the related map
$$\wt{\Lambda_{E^\ast}}: \T^\ast E^\ast\lra \T E^\ast\ ,$$
defined as $\wt{\Lambda_{E^\ast}} 
(\theta):=\theta\lrcorner \Lambda_{E^\ast}$, is a DVB morphism. Lie algebroids are characterized by the condition that $\Lambda_{E^\ast}$ is a linear Poisson structure.

\section{Geometric mechanics via Tulczyjew triples}
\label{sec:TT}

\paragraph{The Hamiltonian side} 
The construction of the Hamiltonian dynamics on the dual bundle $\dual{\bundle}:E^\ast\ra M$ of an algebroid makes a direct use of the DVB morphism $\wt{\Lambda_{E^\ast}}$ introduced above. Given a \emph{Hamiltonian}, i.e. a smooth function $H\in\Cf(E^\ast)$ we define the related \emph{Hamiltonian vector field} $\X_H\in\mathfrak{X}(E^\ast)$ by
\begin{equation}
\label{eqn:ham_vf}
\X_H:=\wt{\Lambda_{E^\ast}}(\dd{} H)=\dd{} H\lrcorner\Lambda_{E^\ast} \ .
\end{equation}
Integral curves of this field are trajectories of our system, i.e. $\xi:[0,T]\ra E^\ast$ is a \emph{Hamiltonian trajectory} if 
$$\dot \xi(t)=\wt{\Lambda_{E^\ast}}(\dd{} H(\xi(t))) \quad\text{for every $t\in[0,T]$.}$$

\paragraph{The Lagrangian side}
The Lagrangian dynamic on an algebroid can also be understood in the language of natural DVB morphisms. By composing the canonical DVB isomorphism $\RR E:\T^\ast E\ra\T^\ast E^\ast$ with $\wt{\Lambda_{E^\ast}}:\T^\ast E^\ast\ra \T E^\ast$ one obtains a DVB morphism $\eps E:=\RR E\circ\wt{\Lambda_{E^\ast}}:\T^\ast E\ra \T E^\ast$, which, in fact, provides an alternative characterization of the algebroid structure on $\bundle:E\ra M$ -- see \cite{GU_1999}. This morphism, in the presence of a \emph{Lagrangian} -- a smooth function $L\in\Cf(E)$ -- allows us to give a geometric formulation of the Lagrangian dynamics on $\bundle$. Namely, a curve $\gamma:[0,T]\ra E$ {\new defines the \emph{phase dynamics} if}  
\begin{equation}
\label{eqn:EL}
\ddt{\dv L(\gamma(t))}=\eps E(\dd{} L(\gamma(t)))\quad\text{for every $t\in(0,T)$.}
\end{equation}
Perhaps it is best to visualize this equation on the following diagram, where solid arrows come from the geometry of the algebroid structure on $\bundle$, the dashed one comes from the Lagrangian, and the dotted one from the trajectory.  
\begin{equation}
\label{diag:EL}
\xymatrix{
&\T^\ast E\ar[rrr]^{\eps E}\ar[d]^{\cotangent E}\ar[dr]^{\vertical E} &&& \T E^\ast\ar[d]^{\tangent{E^\ast}}\\
[0,T]\ar@{.>}[r]^{\gamma}&E\ar@/^1pc/@{-->}[u]^{\dd{} L}& E^\ast\ar[rr]^{=}
 && E^\ast
}\end{equation}
Starting from $[0,T]$ we move right either by the upper road $\eps E(\dd{} L(\gamma)) $ or by the lower one $\vertical E(\dd {}L(\gamma))=\dv L(\gamma)$. Equation \eqref{eqn:EL} means that these two ways are consistent, i.e. the tangent lift of the lower path is precisely the upper one. 

Let us end by making a few remarks about equation \eqref{eqn:EL}:
\begin{itemize}
	\item In diagram \eqref{diag:EL} we pass from $E$ to $E^\ast$ by means of the \emph{Legendre map}, i.e. the vertical derivative of the Lagrangian $\vertical E(\dd{} L)=\dv L:E\ra E^\ast$. 
	\item {\new From the physical point of view} equation \eqref{eqn:EL} {\new should be understood as} the phase dynamics, i.e. as an equation for the curve of momenta $\dv L(\gamma(t))\in E^\ast$. Of course {\new it can be also viewed as the \emph{Euler-Lagrange equation}, i.e. equation for the trajectory $\gamma(t)\in E$, which is, however, of} \underline{implicit} nature, {\new as  
conditions imposed on the velocity $\dot{\gamma}(t)$ are indirect.} Only when the Legendre map is a (local) diffeomorphism one can make these conditions explicit. 
	\item Equation \eqref{eqn:EL} guarantees that a trajectory $\gamma:[0,T]\ra E$ is automatically an \emph{admissible curve}, i.e.
	$$\anchor(\gamma(t))=\ddt{\bundle(\gamma(t))}\quad\text{for every $t\in(0,T)$.}$$
	This follows from the fact that the DVB morphism $\eps E$ projects to a VB map $\anchor:E\ra\T M$ under $\cotangent E$ and $\T\dual{\bundle}$. 
	\item In general, the dynamics \eqref{eqn:EL} is \underline{not} determined by the image of the Lagrangian submanifold $\dd{} L(E)\subset \T^\ast E$ under $\eps E$. More precisely, unless the Legendre map is a local diffeomorphism, condition $\ddt \dv L(\gamma(t)))\in \eps E(\dd{} L(E))$ is weaker than equation \eqref{eqn:EL}. To formulate the latter we need to know the actual composition $\dd{}L \circ\eps E:E\ra \T E^\ast$, not only its image. This is not the case for the Hamiltonian dynamics, where to every element in $E^\ast$ corresponds precisely one element in $\wt{\Lambda_{E^\ast}}(\dd{}H(E^\ast))\subset\T E^\ast$. 
		\item The last observation explains why Lagrangian and Hamiltonian formalisms are non-equivalent for non-regular Lagrangians. Another argument: in a general case the Lagrangian submanifold $\dd{}L(E)\subset \T^\ast E$ is not mapped via $\RR E$ to a submanifold of the form $\dd{} H(E^\ast) \subset \T^\ast E^\ast$ -- cf. Lemma~\ref{lem:L_to_EL}. 	
	\item Equation \eqref{eqn:EL} {\new describes a true dynamics} on an algebroid, i.e. it not only looks similar to the standard Euler-Lagrange equation in local coordinates, but it is actually an equation for critical trajectories of a naturally defined action functional. In this context the appearance of the DVB morphism $\eps E$ in \eqref{eqn:EL} is not surprising -- it can be interpreted as a dual map to a relation which describes admissible variations (see \cite{GG_2008} for details).  
\end{itemize} 

\paragraph{The full Tulczyjew triple} 
Both Lagrangian and Hamiltonian formalisms discussed above can be represented on a single commutative diagram, with the Lagrangian side on the left, Hamiltonian side on the right and the phase dynamics in the middle. All the upper arrows are DVB morphisms. The (Lagr) and (Ham) solid legs of the diagram are isomorphic by means of $\RR E$. The dashed maps are induced by the Lagrangian and the Hamiltonian functions and they are responsible for generating the respective dynamics. 
\begin{equation}
\label{diag:TT}
\xymatrix{
\T^\ast E\ar[rrr]^{\eps E}\ar[d]^{\cotangent E}\ar[dr]^{\vertical E}\ar@/^2pc/[rrrrrr]_{\RR E}^{\simeq} &&& \T E^\ast\ar[d]^{\tangent{E^\ast}}
&&&\T^\ast E^\ast\ar[lll]_{\wt\Lambda_{E^\ast}}\ar[d]_{\cotangent{E^\ast}}\\
E\ar@{-->} @/^1pc/[u]^{\dd{} L}& E^\ast\ar[rr]^{=} && E^\ast&&& E^\ast\ar[lll]_{=}\ar@/_1pc/ @{-->}[u]_{\dd{}H} \\
(Lagr) &&& (Dyn) &&& (Ham)
}\end{equation}
 An analogous diagram for the special case of the tangent algebroid $E=\T M$ was discovered by Tulczyjew \cite{Tulczyjew_1976a,Tulczyjew_1976b}, hence the name -- \emph{Tulczyjew triple}.

\section{A comparison with the prolongation approach}
\label{sec:comparison}
\subsection{On prolongations}
\label{ssec:prolong}
Let us briefly recall the construction of a prolongation of an algebroid and its basic properties.

\paragraph{The $E$-tangent bundle of a fibration}
Let $(\bundle:E\ra M,\anchor,[\cdot,\cdot])$ an AL algebroid and let $\fibration:\PP\ra M$ be a fibration. By the \emph{$E$-tangent bundle of $\fibration$} we shall understand the bundle $\lift{\bundle}{\PP}:\TT \PP E\ra \PP$, where $\TT \PP E:=\{(e,X)\in E_{\fibration(p)}\times \T_p\PP\ |\ \anchor(e)=\T\fibration(X),\ p\in\PP \}$ is the pull-back space and $\lift{\bundle}{\PP}:=\liftanch{\anchor}{\PP}\circ\tangent{\PP}$ is the natural projection:
\begin{equation}
\label{eqn:TPE}
\xymatrix{\TT \PP E \ar@{-->}[rr]^{\liftanch{\anchor}{\PP}}\ar@{-->}[d]_{\down{\fibration}E} \ar@/^1.5pc/[rrrr]^{\lift{\bundle}{\PP}}&& \T \PP\ar[d]^{\T\fibration}\ar[rr]^{\tangent{\PP}}&& \PP\\
E\ar[rr]^{\anchor}&& \T M\ .
}
\end{equation}
 Here $\liftanch{\anchor}{\PP}:\TT\PP E\ra \T\PP$ and $\down{\fibration}E:\TT\PP E\ra E$ are the projections closing the diagram. 

The fiber $(\TT\PP E)_p$ at $p\in\PP$ consists of pairs $(e,X)\in E_{\fibration(p)}\times \T_p\PP$ such that $\anchor(e)=\T\fibration(X)$. By $\I \PP:=(\down{\fibration}E,\liftanch{\anchor}{\PP}):\TT\PP E\hookrightarrow E\times_M\T \PP$ we shall denote the related canonical inclusion. In fact, the $E$-tangent bundle of $\fibration$ is naturally a DVB, with the latter inclusion being a DVB morphism. 

Observe further, that $\liftanch{\anchor}{\PP}$ makes $\lift{\bundle}{\PP}$ an anchored bundle. We shall see shortly that $\lift{\bundle}{\PP}$ carries a natural AL algebroid structure inherited from the AL algebroid structure on $\bundle$, such that $\liftanch{\anchor}{\PP}$ is the corresponding anchor map, and that $\down{\fibration}{E}$ is an algebroid morphism. To show this we will first take a look at the dual bundle $\dual{(\lift{\bundle}{\PP})}:(\TT\PP E)^\ast\ra \PP$. 

\paragraph{The dual of the $E$-tangent bundle}

The construction of the dual bundle of $\lift{\bundle}{\PP}$ can be easily performed by applying the duality functor to diagram~\eqref{eqn:TPE}. That is, the total space $(\TT\PP E)^\ast$ is the following push-out\footnote{Below the arrows actually represent relations not maps -- in general a dual of a VB morphism, such as $\dual{\anchor}$ and $\dual{(\T\fibration)}$ is not a map, but a differential relation linear on fibers.}
\begin{equation}
\label{eqn:TPE_dual}
\xymatrix{(\TT \PP E)^\ast  && \T^\ast \PP \ar@{-->}_{\dual{(\liftanch{\anchor}{\PP})}}[ll]\\
E^\ast\ar@{-->}[u]_{\dual{(\down{\fibration}E)}} && \ar@{->}[ll]_{\dual{\anchor}}\T^\ast M\ar@{->}[u]_{\dual{(\T\fibration)}}\ .
}
\end{equation}
We may thus represent elements of the fiber $(\TT \PP E)^\ast_p$ as sums $\dual{(\down{\fibration}E)} \xi+\dual{(\liftanch{\anchor}{\PP})}\theta$, where $\xi\in E^\ast_{\fibration(p)}$ and $\theta\in\T^\ast_p\PP$. The pairing between such an element and a vector $(e,X)\in(\TT \PP E)_p$ reads simply as
$$\<\dual{(\down{\fibration}E)} \xi+\dual{(\liftanch{\anchor}{\PP})}\theta,(e,X)>=\<\xi,\down{\fibration}E(e,X)>+\<\theta,\liftanch{\anchor}{\PP}(e,X)>=\<\xi,e>+\<\theta,X>\ .$$
In other words, the dual of the canonical inclusion $\dual{\T{\PP}}:E^\ast_M\times\T^\ast \PP\ra (\TT \PP E)^\ast$ is an epimorphism with the kernel consisting of elements of the form $(\anchor^\ast \alpha,-(\T\fibration)^\ast(\alpha))$ for $\alpha\in\T^\ast M$.

\paragraph{The algebroid structure on $\TT\PP E$}
As was mentioned above $\lift{\bundle}{\PP}:\TT\PP E\ra\PP$ carries the canonical algebroid structure. We choose the following (non-standard) characterization of this structure which in our opinion nicely clarifies the idea standing behind the prolongation and emphasizes the role of the almost-Lie condition {\new(cf. \cite[Ex. 2.30]{MJ_MR_aha_2017} for another proof emphasizing the role of the AL condition).} 
\begin{lemma}
\label{lem:prolong}
Let $(\bundle:E\ra M,\anchor,[\cdot,\cdot])$ be an AL algebroid structure, and let $\fibration:\PP\ra M$ be a fibration. Then there is a unique AL algebroid structure on $\lift{\bundle}{\PP}:\TT\PP E\ra\PP$ such that maps $\liftanch{\anchor}{\PP}:\TT\PP E\ra\T \PP$ and $\down{\fibration}E:\TT\PP E\ra E$ are algebroid morphisms.
\end{lemma}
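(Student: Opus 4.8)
The plan is to use the characterization of algebroid structures as degree-one derivations (Proposition~\ref{prop:alg_dif}) together with the morphism condition of Definition~\ref{def:algebroid_morphism}, reducing everything to the generators of $\Sec(\Lambda^\bullet(\TT\PP E)^\ast)$ described by the push-out~\eqref{eqn:TPE_dual}. First I would note that, as a VB, $\dual{(\lift{\bundle}{\PP})}:(\TT\PP E)^\ast\ra \PP$ is generated in degree $\leq 1$ by: functions pulled back from $\PP$ (equivalently, from $M$, plus genuine functions on $\PP$); sections of the form $\dual{(\down{\fibration}E)}\xi$ for $\xi\in\Sec(E^\ast)$; and sections $\dual{(\liftanch{\anchor}{\PP})}\theta$ for $\theta\in\Omega^1(\PP)$ — these span the fibers by the analysis following~\eqref{eqn:TPE_dual}. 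Because the two maps $\liftanch{\anchor}{\PP}$ and $\down{\fibration}E$ together separate such covectors (their duals jointly surject onto $(\TT\PP E)^\ast$), requiring them to be algebroid morphisms \emph{forces} the value of the putative derivation $\dd{\TT\PP E}$ on each generator: on $\fibration^\ast$-functions and on $\dual{(\liftanch{\anchor}{\PP})}\theta$ it must equal the pull-back along $\liftanch{\anchor}{\PP}$ of $\dd{\T\PP}$ applied to the corresponding object on $\T\PP$ (using the canonical tangent-bundle algebroid structure, i.e. the de Rham differential), and on $\dual{(\down{\fibration}E)}\xi$ it must equal the pull-back along $\down{\fibration}E$ of $\dd{E}\xi$. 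This gives \textbf{uniqueness} immediately, since a degree-one derivation of $(\Sec(\Lambda^\bullet(\TT\PP E)^\ast),\wedge)$ is determined by its action on degree-zero and degree-one elements, and these are algebraically generated (over functions, via the Leibniz rule) by the three families above.

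For \textbf{existence} I would turn the forced formulas into a genuine definition and check consistency. The only thing to verify is that this prescription is \emph{well-defined}: the generators are not independent — there is the relation, living in degree one, that for $\alpha\in\Omega^1(M)$ one has $\dual{(\down{\fibration}E)}(\dual{\anchor}\alpha \text{-part})$ matching $-\dual{(\liftanch{\anchor}{\PP})}((\T\fibration)^\ast\alpha)$, reflecting the kernel of $\dual{\T\PP}$ computed after~\eqref{eqn:TPE_dual}. Concretely, for $f\in\Cf(M)$ the element $\dual{(\down{\fibration}E)}(\dd{}f\text{ as a section of }E^\ast)$ and the element $\dual{(\liftanch{\anchor}{\PP})}(\dd{}(f\circ\fibration))$ represent the same section of $(\TT\PP E)^\ast$; hence applying $\dd{\TT\PP E}$ to it via the ``$E$-side'' recipe must agree with applying it via the ``$\T\PP$-side'' recipe. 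On the $E$-side we get $\down{E}^\ast(\dd{E}\dd{E}f)$, and on the $\T\PP$-side we get $(\liftanch{\anchor}{\PP})^\ast(\dd{\T\PP}\dd{\T\PP}(f\circ\fibration))=0$ since the de Rham differential squares to zero. \textbf{This is exactly the point where the almost-Lie axiom~\eqref{eqn:AL} enters}: it is equivalent (as remarked after Proposition~\ref{prop:alg_dif}) to $\dd E\dd E f\equiv 0$ for all $f\in\Cf(M)$, so $\down{E}^\ast(\dd E\dd E f)=0$ and the two recipes agree. Without the AL condition the prescription is inconsistent on this relation, which is precisely why the lemma is false for general anchored bundles. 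The remaining checks — that the $\R$-linear degree-one operator so defined actually satisfies the graded Leibniz rule and therefore defines an AL algebroid structure (anchor $\liftanch{\anchor}{\PP}$, bracket read off from~\eqref{eqn:cartan_formula}) — are routine once well-definedness is secured; one verifies the Leibniz rule on generators and extends, and the compatibility axiom~\eqref{eqn:AL} for the new structure follows because $\liftanch{\anchor}{\PP}$ is by construction an algebroid morphism into $\T\PP$ (cf. the remark after Definition~\ref{def:algebroid_morphism}).

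I would close by recording the induced bracket explicitly for later use: sections of $\lift{\bundle}{\PP}$ can be generated by ``basic'' sections $X\mapsto(e\circ\fibration, \text{lift of }e)$ attached to $e\in\Sec(E)$ together with vertical sections $(0,\V)$ for $\fibration$-vertical vector fields $\V$ on $\PP$, and the bracket on these is determined by $\down{\fibration}E$-relatedness to $[\cdot,\cdot]$ on $E$ and $\liftanch{\anchor}{\PP}$-relatedness to $[\cdot,\cdot]_{\T\PP}$; uniqueness from the first part guarantees these two constraints are compatible and pin the bracket down. \textbf{The main obstacle} is purely the well-definedness step above — making sure the degree-one relation among the generators of $(\TT\PP E)^\ast$ is respected — and its resolution is the single place where axiom~\eqref{eqn:AL}, rather than the full Jacobi identity, is genuinely needed.
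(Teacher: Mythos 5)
Your proposal is correct and follows essentially the same route as the paper: the morphism conditions force the derivation $\dd{\TT \PP E}$ on the generators coming from \eqref{eqn:TPE_dual} (giving uniqueness), and existence reduces to well-definedness with respect to the non-unique decomposition, i.e. to consistency on the kernel of $\dual{(\I{\PP})}$, which is exactly where the almost-Lie axiom enters. The only minor difference is that you verify the relation on exact forms $\dd{}f$ via $\dd E\dd E f\equiv 0$, whereas the paper evaluates it directly for arbitrary $\alpha\in\Omega^1(M)$ using \eqref{eqn:AL}; these are equivalent up to the routine Leibniz/localization step you correctly relegate to ``routine checks.''
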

\begin{proof}[Sketch of the proof]The idea of the proof is quite simple -- we will show that our assumptions on $\liftanch{\anchor}{\PP}$ and $\down{\fibration}E$ fully determine the degree-one derivation $\dd{\TT \PP E}$ on the space of $\TT\PP E$-forms (cf. Proposition~\ref{prop:alg_dif}).  

For every smooth functions $f\in\Cf(\PP)$ we have $f=\dual{(\liftanch{\anchor}{\PP})} f$ (base functions on $\lift{\bundle}{\PP}:\TT\PP E\ra \PP$ and $\tangent P:\T\PP\ra \PP$ are the same). Now since we want $\liftanch{\anchor}{\PP}$ to be an algebroid morphism necessarily (cf. Definition~\ref{def:algebroid_morphism})
$$\dd{\TT \PP E}f=\dd{\TT \PP E}\left[\dual{(\liftanch{\anchor}{\PP})} f\right]=\dual{(\liftanch{\anchor}{\PP})}(\dd{} f)\ .$$
Thus, according to formula \eqref{eqn:cartan_formula}, $\liftanch{\anchor}{\PP}$ must be the anchor map of the algebroid structure on $\TT\PP E$. 

Further, since $\dual{(\I \PP)}:E^\ast\times_M\T^\ast P\ra(\TT\PP E)^\ast$ is an epimorphism, it induces a natural epimorphism at the level of sections
$$\dual{(\I\PP)}:\Cf(P)\otimes\Sec(E^\ast)\times\Omega^1(P)\ra \Sec((\TT\PP E)^\ast)\ .$$
It follows that every one-$\TT \PP E$-form $\Xi$ can be presented (non-uniquely!) as
\begin{equation}
\label{eqn:Xi_decompostion}
\Xi=\sum_i f_i\cdot\dual{(\down{\fibration}E)}e^i+\dual{(\liftanch{\anchor}{\PP})}\theta\ ,
\end{equation}
for some functions $f_i\in\Cf(P)$, sections $e^i\in \Sec(E^\ast)$ and a one-form $\theta\in\Omega^1(\PP)$.  
 
Since we require that $\down{\fibration}E$ and $\liftanch{\anchor}{\PP}$ are algebroid morphisms, by Definition~\ref{def:algebroid_morphism} we have to define 
\begin{equation}
\label{eqn:d_TT_PP_E}
\begin{split}
\dd{\TT \PP E}\Xi:=&
\sum_i\dd{\TT \PP E}\left[f_i\dual{(\down{\fibration}E)} e^i\right]+\dd{\TT \PP E}\left[\dual{(\liftanch{\anchor}{\PP})}\theta\right]=\\
&\sum_i\dd{\TT\PP E} f_i\wedge\dual{(\down{\fibration} E)}e^i+\sum_if_i\dual{(\down{\fibration}E)} \left[\dd E e^i\right]+\dual{(\liftanch{\anchor}{\PP})}\left[\dd{}\theta\right]=\\
&\sum_i\dual{(\liftanch{\anchor}{\PP})}\dd{} f_i\wedge\dual{(\down{\fibration} E)}e^i+\sum_if_i\dual{(\down{\fibration}E)} \left[\dd E e^i\right]+\dual{(\liftanch{\anchor}{\PP})}\left[\dd{}\theta\right]\ .
\end{split}
\end{equation}
As we see, the derivation $\dd{\TT\PP E}$ is fully determined by $\liftanch{\anchor}{\PP}$, $\dd E$ and the de Rham derivative $\dd{}$ on $\PP$. It suffices to check that the value of $\dd{\TT\PP E}\Xi$ does not depend on the choice of the decomposition \eqref{eqn:Xi_decompostion}. This is where axiom \eqref{eqn:AL} comes into play. Computational details can be found in Appendix~\ref{par:proof_prolong}. 
\end{proof}

When referring to bundle $\lift{\bundle}{\PP}:\TT {\PP}E\ra\PP$ with the above algebroid structure, we shall use the term \emph{the $\PP$-prolongation of an algebroid $E$}.

\paragraph{Morphism of prolongations}
Maps between fibrations induce natural algebroid morphisms between the related prolongations of a given algebroid.
\begin{lemma}\label{lem:morphism_prolong}
Let $(\bundle:E\ra M,\anchor,[\cdot,\cdot])$ be an AL algebroid, let $\fibration:\PP\ra M$ and $\fibration':\PP'\ra M$ be fibrations, and let $\psi:\PP\ra\PP'$ be a morphism of these fibrations over $\id_M$. Define a map $\morphlift{\psi}E:\TT\PP E\ra \TT{\PP'}E$ as the universal closure of the diagram
$$\xymatrix{
\TT \PP E\ar[d]_{\down{\fibration}E}\ar[rr]^{\liftanch{\anchor}{\PP}}\ar@/^1.5pc/@{-->}[rrrr]^{\morphlift{\psi}E} && \T \PP\ar@/^1.5pc/[rrrr]^{\T\psi} \ar[d]^{\T\fibration}&&\TT{\PP'} E\ar[d]_{\down{\fibration'}E}\ar[rr]^{\liftanch{\anchor}{\PP'}}  && \T \PP' \ar[d]^{\T\fibration'}\\
E\ar[rr]^{\anchor}\ar@/_1.5pc/[rrrr]^{=}&&\T M&& E\ar[rr]^{\anchor} && \T M\ ,
}$$
i.e. $\morphlift{\psi}E(e,X)=(e,\T\psi(X))$ for every $(e,X)\in\TT\PP E$. 

Then $\morphlift{\psi}E$ is an algebroid morphism.
\end{lemma}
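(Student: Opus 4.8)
The plan is to check, directly from Definition~\ref{def:algebroid_morphism}, that $(\morphlift{\psi}E)^\ast$ intertwines the derivations $\dd{\TT\PP E}$ and $\dd{\TT{\PP'}E}$; by the remark made there it is enough to test this on $0$- and $1$-forms. Everything will follow from the two tautological identities
\begin{equation*}
\down{\fibration'}E\circ\morphlift{\psi}E=\down{\fibration}E\ ,\qquad\liftanch{\anchor}{\PP'}\circ\morphlift{\psi}E=\T\psi\circ\liftanch{\anchor}{\PP}\ ,
\end{equation*}
which are immediate from $\morphlift{\psi}E(e,X)=(e,\T\psi(X))$ and $\fibration'\circ\psi=\fibration$ (the latter being also the reason why $\morphlift{\psi}E$ takes values in $\TT{\PP'}E$), combined with the explicit description of a prolongation derivation given in (the proof of) Lemma~\ref{lem:prolong}, in particular formula \eqref{eqn:d_TT_PP_E}.

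First I would turn the two identities above into pullback formulas. Dualising them on sections --- using the contravariant functoriality $(\alpha\circ\beta)^\ast=\beta^\ast\circ\alpha^\ast$ of the pullback along VB maps, and the fact that $\morphlift{\psi}E$ covers $\psi$ (so on base functions its pullback is the ordinary $\psi^\ast$) --- I obtain, for all $g\in\Cf(\PP')$, $e\in\Sec(E^\ast)$ and $\theta\in\Omega^1(\PP')$,
\begin{equation*}
(\morphlift{\psi}E)^\ast g=\psi^\ast g\ ,\qquad(\morphlift{\psi}E)^\ast\big(\dual{(\down{\fibration'}E)}e\big)=\dual{(\down{\fibration}E)}e\ ,\qquad(\morphlift{\psi}E)^\ast\big(\dual{(\liftanch{\anchor}{\PP'})}\theta\big)=\dual{(\liftanch{\anchor}{\PP})}(\psi^\ast\theta)\ ,
\end{equation*}
where in the last formula $\psi^\ast\theta$ is the usual de Rham pullback of a $1$-form. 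With these at hand the $0$-form case is immediate: since $\dd{\TT{\PP'}E}g=\dual{(\liftanch{\anchor}{\PP'})}(\dd{}g)$ (cf. the proof of Lemma~\ref{lem:prolong}), one has $(\morphlift{\psi}E)^\ast(\dd{\TT{\PP'}E}g)=\dual{(\liftanch{\anchor}{\PP})}\big(\psi^\ast(\dd{}g)\big)=\dual{(\liftanch{\anchor}{\PP})}\big(\dd{}(\psi^\ast g)\big)=\dd{\TT\PP E}\big((\morphlift{\psi}E)^\ast g\big)$, the middle equality being just that de Rham pullback commutes with $\dd{}$.

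For the $1$-form case I would represent an arbitrary $\Xi\in\Sec((\TT{\PP'}E)^\ast)$ in the form $\Xi=\sum_i g_i\cdot\dual{(\down{\fibration'}E)}e^i+\dual{(\liftanch{\anchor}{\PP'})}\theta$ (possible since $\dual{(\I{\PP'})}$ is an epimorphism, exactly as in \eqref{eqn:Xi_decompostion}), and then, using the three pullback formulas above and the fact that $(\morphlift{\psi}E)^\ast$ respects $\wedge$, obtain $(\morphlift{\psi}E)^\ast\Xi=\sum_i(\psi^\ast g_i)\cdot\dual{(\down{\fibration}E)}e^i+\dual{(\liftanch{\anchor}{\PP})}(\psi^\ast\theta)$. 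I would then compute $(\morphlift{\psi}E)^\ast(\dd{\TT{\PP'}E}\Xi)$ by pulling back \eqref{eqn:d_TT_PP_E} (written on $\PP'$) term by term, and $\dd{\TT\PP E}\big((\morphlift{\psi}E)^\ast\Xi\big)$ by applying \eqref{eqn:d_TT_PP_E} on $\PP$ with $f_i:=\psi^\ast g_i$; using $\psi^\ast\dd{}=\dd{}\psi^\ast$ once more, both sides come out equal to
\begin{equation*}
\sum_i\dual{(\liftanch{\anchor}{\PP})}\big(\dd{}(\psi^\ast g_i)\big)\wedge\dual{(\down{\fibration}E)}e^i+\sum_i(\psi^\ast g_i)\,\dual{(\down{\fibration}E)}[\dd E e^i]+\dual{(\liftanch{\anchor}{\PP})}\big[\dd{}(\psi^\ast\theta)\big]\ .
\end{equation*}
No independence-of-decomposition check is needed, since $\dd{\TT\PP E}$ is already known to be well defined by Lemma~\ref{lem:prolong}; hence $\morphlift{\psi}E$ is an algebroid morphism.

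I expect the only real difficulty to be bookkeeping --- keeping the various kinds of pullback apart (of base functions, of exterior forms along a VB map, and the de Rham pullback along $\psi$) and applying functoriality in the right order. Conceptually the statement is forced: $\morphlift{\psi}E$ is, by its very construction, compatible with the two ``legs'' $\down{\fibration}E$ and $\liftanch{\anchor}{\PP}$ which, according to Lemma~\ref{lem:prolong}, pin down the prolongation algebroid. (One could also phrase the argument by viewing $\TT{\PP'}E$ as the fibred product over $\T M$ of the algebroid morphisms $\anchor:E\ra\T M$ and $\T\fibration':\T\PP'\ra\T M$, with $\morphlift{\psi}E$ induced by $\down{\fibration}E$ and $\T\psi\circ\liftanch{\anchor}{\PP}$ --- both algebroid morphisms --- but the direct computation above is shorter and needs no theory of fibred products of algebroids.)
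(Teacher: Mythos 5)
Your argument is correct and follows essentially the same route as the paper: the paper's proof simply invokes the characterization of the prolongation structure from Lemma~\ref{lem:prolong} together with the fact that $\T\psi$ is a morphism of tangent algebroids (which in your computation appears as $\psi^\ast\dd{}=\dd{}\psi^\ast$), and your two tautological leg identities plus the decomposition \eqref{eqn:Xi_decompostion} and formula \eqref{eqn:d_TT_PP_E} are exactly what that one-line argument compresses. Your version just spells out in full the verification the paper leaves implicit, including the correct observation that no independence-of-decomposition check is needed since the derivations are already well defined.
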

\begin{proof}
This follows immediately from the fact that the tangent map $\T\psi:\T\PP\ra\T\PP'$ is a morphism of tangent algebroids and the characterization of algebroid prolongations from Lemma~\ref{lem:prolong}.  
\end{proof}

\subsection{Hamiltonian mechanics}
\label{ssec:hamilton}
It will be easier to start our comparison of the Tulczyjew triple and the prolongation approaches to geometric mechanics from the Hamiltonian side of the picture. 

\paragraph{Hamiltonian mechanics in the prolongation approach}

For this end, $\lift{\bundle}{E^\ast}:\TT {E^\ast}E\ra E^\ast$, the $E^\ast$-prolongation of the algebroid $\bundle:E\ra M$ will be of crucial importance. It turns out that the bundle $\lift{\bundle}{E^\ast}$ is equipped with the canonical non-degenerate 2-form $\Omega_{E}\in \Lambda^2((\TT {E^\ast} E)^\ast)$ defined by 
$$\Omega_E:=-\dd{\TT {E^\ast}E}\mu_E\ ,$$
where $\mu_E\in \Sec((\TT{E^\ast}E)^\ast)$, given by $\mu_E(\xi)(e,X):=\<\xi,e>$, is called the \emph{tautological 1-form}. Let now $H\in\Cf(E^\ast)$ be a smooth function (the \emph{Hamiltonian}). The related \emph{Hamiltonian section} $\Xi_H\in\Sec(\TT{E^\ast}E)$ is defined as the unique solution (recall that $\Omega_E$ is non-degenerate) of the equation
\begin{equation}
\label{eqn:ham_prolong}
\iota_{\Xi_H}\Omega_E=\dd{\TT{E^\ast}E}H\ .
\end{equation}
(Here we treat $H$ as a basic function on the prolongation $\lift{\bundle}{E^\ast}:\TT{E^\ast}E\ra E^\ast$.) The $\T E^\ast$-projection of $\Xi_H$, i.e. $\liftanch{\anchor}{E^\ast}(\Xi_H)$, defines the Hamiltonian dynamics. 

\paragraph{Tulczyjew triple interpretation}
Let us take a second look at equation~\eqref{eqn:ham_prolong}. On the right-hand side we have an algebroidal derivation of a function. From~\eqref{eqn:cartan_formula} we conclude that for any $(e,X)\in\TT {E^\ast}E$ we have 
$$\dd{\TT{E^\ast}E}H(e,X)=\liftanch{\anchor}{E^\ast}(e,X) H=X(H)=\<\dd{} H,X>\,$$
where in the last expression $H$ is again understood as a function on $E^\ast$, not the whole $\TT{E^\ast}E$. In other words, we have $\dd{\TT{E^\ast}E}H=\dual{(\I{E^\ast})}(0,\dd{} H)\in(\TT{E^\ast}E)^\ast$. 
Next note that $\Omega_E$ defines a VB isomorphism $\wt{\Omega_E}:\TT{E^\ast}E\ra(\TT{E^\ast} E)^\ast$. Now the Hamiltonian section can be interpreted as the image of $\dual{(\I{E^\ast})}(0,\dd{} H)$ under the inverse of this isomorphism:
\begin{equation}
\label{eqn:ham_prolong_new}
\Xi_H=\left(\wt{\Omega_E}\right)^{-1}\dual{(\I{E^\ast})}(0,\dd{} H)\ . 
\end{equation}
The following result shows that under the canonical identification ${\I{E^\ast}}$, the isomorphism $(\wt{\Omega_E})^{-1}$ is just a simple extension of the map $\wt{\Lambda_{E^\ast}}$ induced by the linear bi-vector structure $\Lambda_{E^\ast}$ on $E^\ast$ (cf. Proposition~\ref{prop:alg_bi_vect}). 
\begin{theorem}\label{thm:hamilton} The composition
$$\xymatrix{E^\ast\times_M\T^\ast E^\ast \ar[r]^>>>>>{\dual{(\I{E^\ast})}}&(\TT{E^\ast}E)^\ast\ar[rr]^{(\wt{\Omega_E})^{-1}}&&\TT{E^\ast}E \ar[r]^>>>>{\I{E^\ast}}&E\times_M\T E^\ast\ ,
}$$
reads as
\begin{equation}
\label{eqn:omega_stripped}
(\xi,\theta)\longmapsto (\vertical{E^\ast}(\theta),\wt{\Lambda_{E^\ast}}(\theta)-\V\xi)\ .
\end{equation}
Here $\vertical{E^\ast}:\T^\ast E^\ast\ra E$ is the canonical projection to the second leg of the DVB $\T^\ast E^\ast$; $\wt{\Lambda_{E^\ast}}:\T^\ast E^\ast\ra \T E^\ast$ is a DVB morphism corresponding to the linear bi-vector structure on $E^\ast$; and $\V \xi$ stands for a vertical vector field $E^\ast\ni\eta\mapsto \V_\eta\xi=\frac{\dd{}}{\dd{} t}\big|_{t=0}(\eta+t\xi)\in\T_\eta E^\ast$.  

It follows that the Hamiltonian section for $H\in\Cf(E^\ast)$ reads as
\begin{equation}
\label{eqn:hamilton_stripped}
\Xi_H=\left(\vertical{E^\ast}(\dd{}H), \wt{\Lambda_{E^\ast}}(\dd{}H)\right)=(\dv H,\X_H)\ .
\end{equation}
\end{theorem}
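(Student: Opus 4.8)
The plan is to compute the composition $\I{E^\ast}\circ(\wt{\Omega_E})^{-1}\circ\dual{(\I{E^\ast})}$ by unwinding the definition $\Omega_E=-\dd{\TT{E^\ast}E}\mu_E$ on a convenient spanning set of covectors, rather than working directly with the inverse isomorphism. Concretely, fix $(e,X)\in\TT{E^\ast}E$ over a point $\xi_0\in E^\ast$ and an arbitrary test vector $(e',X')\in(\TT{E^\ast}E)_{\xi_0}$; then $\langle\wt{\Omega_E}(e,X),(e',X')\rangle=\Omega_E((e,X),(e',X'))$, and I would evaluate the right-hand side using the Cartan-type formula \eqref{eqn:cartan_formula} for $\dd{\TT{E^\ast}E}$ applied to the one-form $\mu_E$. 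The tautological form $\mu_E$ is, up to the identification $\dual{(\I{E^\ast})}$, the ``$E^\ast$-part'' $\dual{(\down{\fibration}E)}(\mathrm{id})$ of the canonical object; writing it as $\sum_i (\dual{\bundle}^\ast\text{-functions})\cdot\dual{(\down{\fibration}E)}e^i$ locally (with $e^i$ a frame of $\Sec(E)$ and the coefficients the corresponding linear fibre coordinates $\iota_{e^i}$ on $E^\ast$), formula \eqref{eqn:d_TT_PP_E} expresses $\dd{\TT{E^\ast}E}\mu_E$ in terms of $\liftanch{\anchor}{E^\ast}$, the algebroid differential $\dd E$ (equivalently the bracket and anchor of $E$), and the de Rham differential on $E^\ast$. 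The key recognition is that these are exactly the data packaged in the linear bi-vector $\Lambda_{E^\ast}$ via \eqref{eqn:lambda_formula}.

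The main computation I would carry out is to show the two defining relations. First, projecting \eqref{eqn:omega_stripped} under $\down{\fibration}E$ (the first leg), I must show that if $\wt{\Omega_E}(e,X)=\dual{(\I{E^\ast})}(\xi,\theta)$ then $e=\vertical{E^\ast}(\theta)$. This should come out of pairing $\wt{\Omega_E}(e,X)$ against vertical test elements $(0,\V_{\xi_0}\eta)\in\TT{E^\ast}E$ (these lie in the prolongation since $\anchor(0)=0=\T\dual{\bundle}(\V\eta)$): on one side the pairing is $\langle\theta,\V_{\xi_0}\eta\rangle$, which by the definition of $\vertical{E^\ast}$ recalled in the Preliminaries equals $\langle\eta,\vertical{E^\ast}(\theta)\rangle$; on the other side, $\Omega_E((e,X),(0,\V\eta))$ computed from $-\dd{\TT{E^\ast}E}\mu_E$ collapses (the anchor of the vertical element kills the function terms) to $\langle\eta,e\rangle$. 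Matching gives $e=\vertical{E^\ast}(\theta)$. Second, and this is the heart of the matter, I must identify the $\T E^\ast$-component $\liftanch{\anchor}{E^\ast}(e,X)=X$ as $\wt{\Lambda_{E^\ast}}(\theta)-\V\xi$. Pairing $\wt{\Omega_E}(e,X)=\dual{(\I{E^\ast})}(\xi,\theta)$ against a general test element $(e',X')$ gives $\langle\xi,e'\rangle+\langle\theta,X'\rangle$ on the left; on the right, the explicit expansion of $-\dd{\TT{E^\ast}E}\mu_E((e,X),(e',X'))$ produces terms of the form $\anchor(e)\langle\cdot,e'\rangle-\anchor(e')\langle\cdot,e\rangle-\langle\cdot,[\,,\,]\rangle$ evaluated on the frame, which — using \eqref{eqn:lambda_formula} — is precisely how $\theta\lrcorner\Lambda_{E^\ast}=\wt{\Lambda_{E^\ast}}(\theta)$ pairs against the one-forms $\dd{}\iota_{e'}$ and $\dual{\bundle}^\ast\dd{}f$; the ``$+\V\xi$'' correction term is exactly what accounts for the $\langle\xi,e'\rangle$ summand, since $\langle\V_{\xi_0}(\text{something}),\dd{}\iota_{e'}\rangle$ picks out $\langle\text{something},e'\rangle$. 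Carefully bookkeeping the signs (recall $\Omega_E=-\dd{}\mu_E$) then yields $X=\wt{\Lambda_{E^\ast}}(\theta)-\V\xi$.

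The anticipated main obstacle is the bookkeeping in this second step: $\mu_E$ is naturally a \emph{linear} one-form on $\TT{E^\ast}E$ whose coefficient functions are themselves the fibre coordinates on $E^\ast$, so when \eqref{eqn:cartan_formula} is applied the ``$\anchor(e)f$'' pieces involve differentiating these linear functions along anchored directions, and it is easy to drop or double-count a term. I would manage this by choosing local coordinates $(x^a)$ on $M$ and a frame $(e_\alpha)$ of $E$ with dual frame $(e^\alpha)$ and fibre coordinates $(p_\alpha=\iota_{e_\alpha})$ on $E^\ast$, writing $\mu_E=\sum_\alpha p_\alpha\,\dual{(\down{\fibration}E)}e^\alpha$, and then applying \eqref{eqn:d_TT_PP_E} literally with $f_\alpha=p_\alpha$, comparing the outcome against the coordinate expression of $\Lambda_{E^\ast}$ from \eqref{eqn:lambda_formula} (namely $\Lambda_{E^\ast}=\anchor_\alpha^a\,\partial_{p_\alpha}\wedge\partial_{x^a}+\tfrac12 C^\gamma_{\alpha\beta}p_\gamma\,\partial_{p_\alpha}\wedge\partial_{p_\beta}$, with $\anchor(e_\alpha)=\anchor^a_\alpha\partial_{x^a}$ and $[e_\alpha,e_\beta]=C^\gamma_{\alpha\beta}e_\gamma$). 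Once \eqref{eqn:omega_stripped} is established, the final displayed formula \eqref{eqn:hamilton_stripped} is immediate: substitute $(\xi,\theta)=(0,\dd{}H)$ into \eqref{eqn:omega_stripped} (this is exactly $\dual{(\I{E^\ast})}(0,\dd{}H)$, which by the ``Tulczyjew triple interpretation'' paragraph equals $\dd{\TT{E^\ast}E}H$, so its preimage under $\wt{\Omega_E}$ is $\Xi_H$ by \eqref{eqn:ham_prolong_new}), giving $\I{E^\ast}(\Xi_H)=(\vertical{E^\ast}(\dd{}H),\wt{\Lambda_{E^\ast}}(\dd{}H))$, and recognizing $\vertical{E^\ast}(\dd{}H)=\dv H$ and $\wt{\Lambda_{E^\ast}}(\dd{}H)=\X_H$ by \eqref{eqn:ham_vf} finishes it.
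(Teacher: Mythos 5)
Your proposal is correct and follows essentially the same route as the paper's own proof: both reduce the claim to the pairing identity $\Omega_E\big((\vertical{E^\ast}(\theta),\wt{\Lambda_{E^\ast}}(\theta)-\V\xi),(e',X')\big)=\<\xi,e'>+\<\theta,X'>$, evaluate $\Omega_E=-\dd{\TT{E^\ast}E}\mu_E$ via the derivation formula from the proof of Lemma~\ref{lem:prolong}, and match the result against the defining relations \eqref{eqn:lambda_formula} of $\Lambda_{E^\ast}$, with the passage to \eqref{eqn:hamilton_stripped} via $\dd{\TT{E^\ast}E}H=\dual{(\I{E^\ast})}(0,\dd{}H)$ identical in both. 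The only (cosmetic) difference is that you verify the identity on vertical test elements plus a local frame computation with the fibre coordinates $p_\alpha$, while the paper checks it invariantly on the spanning covectors $\theta=\dd{}\iota_{\wt e}$ and $\theta=(\dual{\bundle})^\ast\dd{}f$.
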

The proof is just a matter of computation -- see Appendix~\ref{proof:hamilton}. Actually, the coordinate formula for $\wt{\Omega_E}$ can be found in \cite{LMM_2005}.

Let us summarize our considerations. Formula~\eqref{eqn:hamilton_stripped} ensures us that the Hamiltonian evolutions obtained in both approaches are the same, i.e. $\X_H=\liftanch{\anchor}{E^\ast}(\Xi_H)$. In fact in the prolongation approach we additionally calculate the vertical derivative of the Hamiltonian. Moreover, formula \eqref{eqn:omega_stripped} in Theorem~\ref{thm:hamilton} shows how the Hamiltonian side of the Tulczyjew triple (i.e., the DVB map $\wt{\Lambda_{E^\ast}}$) ``sits'' inside the 2-form $\Omega_E$. In fact, we may think of $(\wt{\Omega_E})^{-1}$ as an extension of $\wt{\Lambda_{E^\ast}}$.

\begin{remark}
Let us remark that if we change the object which generates the dynamics in~\eqref{eqn:ham_prolong_new} by adding a ``force field''; i.e. instead of $(0,\dd{}{H})$ we take $(\xi,\dd{}H)$, where $\xi:E^\ast\to E^\ast$ is a map over $\id_M$; then the resulting dynamics changes to $\X_H-\V\xi$. This can be interpreted as a Hamiltonian system with external forces.  
\end{remark} 

\subsection{Lagrangian mechanics}
\label{ssec:lagrange}

The construction of the Lagrangian formalism in the prolongation approach tries to mimic the structures present in the Hamiltonian formalism, yet this time one shall work with  $\lift{\bundle}E:\TT E E\ra E$ -- the $E$-prolongation of $\bundle:E\ra M$ -- instead of $\lift{\sigma}{E^\ast}$. Given a \emph{Lagrangian} $L\in\Cf(E)$ one aims to construct a 2-form $\omega_L\in\Sec(\Lambda^2(\TT E E)^\ast)$ (it will depend on $L$) and then formulates the dynamics in a form similar to \eqref{eqn:ham_prolong}. The original construction of $\omega_L$ from \cite{Martinez_2001} is quite complicated as it generalizes an analogous construction from Klein's paper \cite{Klein_1962} which makes extensive use of the canonical almost-tangent structure on the tangent bundle. Instead, we prefer the following simplified approach.

\paragraph{Lagrangian mechanics in the prolongation approach} 
We shall use the \emph{Legendre map} related with $L$, i.e. $\lambda_L:=\dv L:E\ra E^\ast$. It gives rise to the related map on prolongations $\morphlift{\lambda_L}E:\TT E E\ra \TT {E^\ast}E$. Now we can define the \emph{energy} $\EL\in\Cf(E)$ and the desired 2-form $\omega_L\in\Sec(\Lambda^2(\TT E E)^\ast)$ by 
$$\EL(e):=\<\lambda_L(e),e>-L(e)\quad \text{and}\quad \omega_L:=(\morphlift{\lambda_L}E)^\ast \Omega_E\ .$$
We say that a curve $\gamma:[0,T]\ra E$ is a \emph{solution of the Euler-Lagrange equations} if 
\begin{equation}
\label{eqn:EL_prolong}
\iota_{(\gamma(t),\dot\gamma(t))}\omega_L=\dd{\TT EE}\EL(\gamma(t))\quad\text{for every $t\in(0,T)$.}
\end{equation}
So the basic idea here is to pull-back the symplectic form $\Omega_E$ from $\TT {E^\ast} E$ to $\TT EE$ by means of the Legendre map and to change the generating object of the dynamics from the Lagrangian $L$ to the energy $\EL$. This is completely analogous to the original Weinstein's treatment of regular Lagrangian systems in \cite{Weinstein_1996}. Quite surprisingly here this approach works well also in a non-regular case. However, \emph{a priori} it is not clear if equation \eqref{eqn:EL_prolong} describes the same curves as equation \eqref{eqn:EL}. 
\paragraph{Relation with the Tulczyjew triple} 
Now we shall study the geometry behind equation \eqref{eqn:EL_prolong} and its relations with the Tulczyjew triple approach. 

By the same argument as in the Hamiltonian case, we actually have $\dd{\TT EE}\EL=(\I E)^\ast(0,\dd{}\EL)$. On the other hand, by construction  the morphism $\wt{\omega_L}:\TT EE\ra (\TT E E)^\ast$, naturally induced by the two-form $\omega_L$, is defined as the dashed arrow closing the following diagram of maps between prolongations (for convenience we also marked two legs of the respective pull-back or push-out).
\begin{equation}
\label{diag:omega_L}
\xymatrix{&&\TT{E^\ast}E\ar[dl]\ar[dr] &&& \ar[lll]_{(\wt{\Omega_E})^{-1}}^{\simeq}(\TT {E^\ast}E)^\ast\ar[dd]^{(\morphlift{\lambda_L} E)^\ast}&\\
&\T E^\ast &&E&E^\ast\ar[ru]\ar[dd]^<<<<<<{=}&& \T^\ast E^\ast\ar[ul]\ar[dd]^{(\T\lambda_L)^\ast}
&\\
&&\TT EE\ar[dl]\ar[uu]^{\morphlift{\lambda_L}E} \ar[dr]\ar@{-->}[rrr]^{\wt{\omega_L}}&&& (\TT E E)^\ast&\\
[0,T]\ar@{..>}[r]^{\T\gamma} \ar@/_1.5pc/@{..>}[rrr]^{\gamma}&\T E \ar[uu]^{\T\lambda_L}&&E\ar[uu]^<<<<<<{=}&E^\ast\ar[ru]&& \T^\ast E\ar[ul]_{(\liftanch{\anchor}{E})^\ast}&[0,T] \ar@{..>}[l]_{\dd{}\EL(\gamma)}\ar@{..>}@/^1.5pc/[lll]_{0}\ .
}\end{equation}
When formulating equation \eqref{eqn:EL_prolong} we basically ``feed'' this diagram by a pair of curves $\gamma\circ(0,\dd{} \EL):[0,T]\ra E\times\T^\ast E$ in the bottom-right corner; and a pair of curves $(\gamma,\dot \gamma):[0,T]\ra E\times\T E$ in the bottom-left corner (dotted arrows) and demand that the resulting curves are compatible. Now, the result below follows directly from formula \eqref{eqn:omega_stripped} and  simple diagram chasing.
\begin{proposition}
A curve $\gamma:[0,T]\ra E$ is a solution of equation \eqref{eqn:EL_prolong} if and only if for every $t\in[0,T]$ there exists an element $\theta_t\in \T^\ast E^\ast$ such that $\dd{}\EL(\gamma(t))=(\T\lambda_L)^\ast\theta_L$ and $(\gamma(t),\T\lambda_L(\dot\gamma(t)))=\left((\I{E^\ast})^\ast\circ\wt{\Omega_E}^{-1}\circ\I{E^\ast}\right)(0,\theta_t)$. By the results of Theorem~\ref{thm:hamilton} this translates as
\begin{align}
\label{eqn:EL_1}
\dd{}\EL(\gamma(t))&=(\T\lambda_L)^\ast\theta_t\ ,\\
\label{eqn:EL_2}
\gamma(t)&=\vertical{E^\ast}(\theta_t) \ ,\\
\label{eqn:EL_3}
\T\lambda_L(\dot\gamma(t))&=\wt{\Lambda_{E^\ast}}(\theta_t)\ .
\end{align} 
\end{proposition}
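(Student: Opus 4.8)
The plan is to recast equation~\eqref{eqn:EL_prolong} as an identity in the fibre $(\TT EE)^\ast_{\gamma(t)}$, transport it along the prolonged Legendre map $\morphlift{\lambda_L}E$ to the ``legs'' $E^\ast,\T^\ast E^\ast$ of the $E^\ast$-prolongation, and there invoke Theorem~\ref{thm:hamilton}; this is exactly the diagram chase suggested by diagram~\eqref{diag:omega_L}. Two functoriality remarks set it up. First, $\omega_L=(\morphlift{\lambda_L}E)^\ast\Omega_E$ gives $\wt{\omega_L}=(\morphlift{\lambda_L}E)^\ast\circ\wt{\Omega_E}\circ\morphlift{\lambda_L}E$. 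Second, from $\I{E^\ast}\circ\morphlift{\lambda_L}E=(\id_E\times\T\lambda_L)\circ\I E$ (Lemma~\ref{lem:morphism_prolong}), transposing over a point $e\in E$ yields
\[ (\morphlift{\lambda_L}E)^\ast\circ\dual{(\I{E^\ast})}=\dual{(\I E)}\circ(\id_{E^\ast}\times(\T\lambda_L)^\ast)\ , \]
where $\dual{(\I E)}\colon E^\ast\times_M\T^\ast E\to(\TT EE)^\ast$ is onto with $\ker\dual{(\I E)}=\{(\anchor^\ast\alpha,-(\T\bundle)^\ast\alpha):\alpha\in\T^\ast M\}$, and similarly $\ker\dual{(\I{E^\ast})}=\{(\anchor^\ast\beta,-(\T\dual\bundle)^\ast\beta):\beta\in\T^\ast M\}$. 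Put $w_t:=\morphlift{\lambda_L}E(\gamma(t),\dot\gamma(t))$, so $\I{E^\ast}(w_t)=(\gamma(t),\T\lambda_L(\dot\gamma(t)))$ by Lemma~\ref{lem:morphism_prolong} (here $(\gamma(t),\dot\gamma(t))$ is a genuine element of $\TT EE$ --- admissibility being built into the meaning of \eqref{eqn:EL_prolong}). Using $\dd{\TT EE}\EL(\gamma(t))=\dual{(\I E)}(0,\dd{}\EL(\gamma(t)))$, equation~\eqref{eqn:EL_prolong} now reads
\[ (\morphlift{\lambda_L}E)^\ast\bigl(\wt{\Omega_E}(w_t)\bigr)=\dual{(\I E)}\bigl(0,\dd{}\EL(\gamma(t))\bigr)\quad\text{in }(\TT EE)^\ast_{\gamma(t)}\ . \]

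For the implication $\Rightarrow$ I would pick any $(\xi_t,\theta_t)$ with $\wt{\Omega_E}(w_t)=\dual{(\I{E^\ast})}(\xi_t,\theta_t)$ (possible since $\dual{(\I{E^\ast})}$ is an epimorphism). By the transposed identity, \eqref{eqn:EL_prolong} becomes $\dual{(\I E)}\bigl(\xi_t,(\T\lambda_L)^\ast\theta_t-\dd{}\EL(\gamma(t))\bigr)=0$, so there is $\alpha_t\in\T^\ast M$ with $\xi_t=\anchor^\ast\alpha_t$ and $(\T\lambda_L)^\ast\theta_t=\dd{}\EL(\gamma(t))-(\T\bundle)^\ast\alpha_t$. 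Replace $\theta_t$ by $\theta_t+(\T\dual\bundle)^\ast\alpha_t$: because $\bundle=\dual\bundle\circ\lambda_L$, the pair $(0,\theta_t+(\T\dual\bundle)^\ast\alpha_t)$ differs from $(\xi_t,\theta_t)$ by an element of $\ker\dual{(\I{E^\ast})}$, hence still represents $\wt{\Omega_E}(w_t)$, while now $(\T\lambda_L)^\ast\theta_t=\dd{}\EL(\gamma(t))$ --- this is \eqref{eqn:EL_1}. Thus $w_t=\wt{\Omega_E}^{-1}\dual{(\I{E^\ast})}(0,\theta_t)$, and feeding this into Theorem~\ref{thm:hamilton} (force $\xi=0$) gives $\I{E^\ast}(w_t)=(\vertical{E^\ast}(\theta_t),\wt{\Lambda_{E^\ast}}(\theta_t))$; comparing with $\I{E^\ast}(w_t)=(\gamma(t),\T\lambda_L(\dot\gamma(t)))$ produces \eqref{eqn:EL_2} and \eqref{eqn:EL_3}.

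The converse runs the same chain backwards: given $\theta_t$ with \eqref{eqn:EL_1}--\eqref{eqn:EL_3}, set $w_t:=\wt{\Omega_E}^{-1}\dual{(\I{E^\ast})}(0,\theta_t)$; Theorem~\ref{thm:hamilton} together with \eqref{eqn:EL_2}--\eqref{eqn:EL_3} gives $\I{E^\ast}(w_t)=(\vertical{E^\ast}(\theta_t),\wt{\Lambda_{E^\ast}}(\theta_t))=(\gamma(t),\T\lambda_L(\dot\gamma(t)))$. As this lies in $\TT{E^\ast}E$ it forces $\anchor(\gamma(t))=\T\bundle(\dot\gamma(t))$, so $(\gamma(t),\dot\gamma(t))\in\TT EE$ and, by injectivity of $\I{E^\ast}$ and Lemma~\ref{lem:morphism_prolong}, $w_t=\morphlift{\lambda_L}E(\gamma(t),\dot\gamma(t))$. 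Then, with \eqref{eqn:EL_1} used at the last equality,
\[ \iota_{(\gamma(t),\dot\gamma(t))}\omega_L=(\morphlift{\lambda_L}E)^\ast\bigl(\wt{\Omega_E}(w_t)\bigr)=(\morphlift{\lambda_L}E)^\ast\dual{(\I{E^\ast})}(0,\theta_t)=\dual{(\I E)}\bigl(0,(\T\lambda_L)^\ast\theta_t\bigr)=\dual{(\I E)}\bigl(0,\dd{}\EL(\gamma(t))\bigr)=\dd{\TT EE}\EL(\gamma(t))\ , \]
which is \eqref{eqn:EL_prolong}.

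I expect the one genuinely delicate point to be the kernel bookkeeping in the first half of $\Rightarrow$: equation~\eqref{eqn:EL_prolong} pins down $\wt{\Omega_E}(w_t)$ only after the non-injective push-down $(\morphlift{\lambda_L}E)^\ast$, hence only modulo $\ker\dual{(\I{E^\ast})}$, and one must spend precisely this ambiguity --- matched against $\ker\dual{(\I E)}$ via the relation $\bundle=\dual\bundle\circ\lambda_L$ --- to cancel the spurious ``external force'' term $\V\xi$ appearing in formula~\eqref{eqn:omega_stripped}, arriving at the clean normalisation~\eqref{eqn:EL_1}. Everything else is the diagram chase of~\eqref{diag:omega_L} plus the two functoriality identities (for $\wt{\,\cdot\,}$ and for transposes) recorded at the start.
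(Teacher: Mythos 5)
Your proof is correct and follows essentially the same route the paper intends: the recasting of \eqref{eqn:EL_prolong} through diagram \eqref{diag:omega_L} and formula \eqref{eqn:omega_stripped} of Theorem~\ref{thm:hamilton}, which the paper dismisses as ``simple diagram chasing'' without further detail. The only thing you add is the explicit kernel bookkeeping --- normalising the representative $(\xi_t,\theta_t)$ of $\wt{\Omega_E}(w_t)$ modulo $\ker\dual{(\I{E^\ast})}$, matched against $\ker\dual{(\I{E})}$ via $\bundle=\dual{\bundle}\circ\lambda_L$ to reach $\xi_t=0$ --- a point the paper leaves implicit, and your handling of it is accurate.
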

To see that equations \eqref{eqn:EL_1}--\eqref{eqn:EL_3} are equivalent to equation \eqref{eqn:EL} {\new (understood as an equation for $\gamma:[0,T]\ra E$)} we will need the following two lemmata. 
\begin{lemma}\label{lem:L_to_EL}
 The following diagram commutes
\begin{equation}
\label{diag:L_to_EL}
\xymatrix{\T^\ast E^\ast\ar[d]_{(\T\lambda_L)^\ast} && \T^\ast E\ar[ll]_{\RR E}\\
\T^\ast E && E\ .\ar[u]_{\dd{} L}\ar[ll]_{\dd{}\EL}
}\end{equation}
\end{lemma}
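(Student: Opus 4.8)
The statement to prove is the identity $(\T\lambda_L)^\ast\circ\RR E\circ\dd{}L=\dd{}\EL$ of maps $E\to\T^\ast E$, and the plan is to check it pointwise. Fix $e\in E$ and set $\xi:=\lambda_L(e)$. First I would locate the covector $\RR E(\dd{}L(e))$: by \eqref{eqn:R_E_legs} it lies over $\cotangent{E^\ast}(\RR E(\dd{}L(e)))=\vertical E(\dd{}L(e))=\dv L(e)=\xi$, and since $\lambda_L$ is fibred over $M$ the point $(e,\xi)$ belongs to $C:=E\times_M E^\ast$. Hence the pair $\big(\dd{}L(e),\RR E(\dd{}L(e))\big)$, read through the identification $\T^\ast E\times\T^\ast E^\ast\simeq\T^\ast(E\times E^\ast)$, is a covector based at $(e,\xi)$; and because $\gra(\RR E)=C_f$ for the canonical pairing $f=\<\cdot,\cdot>\colon C\to\R$, this covector restricts on $\T_{(e,\xi)}C$ to the differential of $f$.

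The plan is then to evaluate this restriction condition on a well-chosen curve. Let $s\mapsto e(s)$ be any curve in $E$ with $e(0)=e$ and $\dot e(0)=v$. Because $\lambda_L$ covers $\id_M$, the curve $s\mapsto(e(s),\lambda_L(e(s)))$ stays inside $C$, and its velocity at $s=0$ is $(v,\T_e\lambda_L(v))\in\T_{(e,\xi)}C$. Pairing the covector above with this velocity gives
\begin{equation}
\label{eqn:lem_L_EL_key}
\<\dd{}L(e),v>+\<\RR E(\dd{}L(e)),\T_e\lambda_L(v)>=\frac{\dd{}}{\dd{}s}\Big|_{s=0}\<\lambda_L(e(s)),e(s)>\ .
\end{equation}
Now I would simply unwind the two sides. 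By the definition of the energy, $\<\lambda_L(e'),e'>=\EL(e')+L(e')$ for every $e'\in E$, so the right-hand side of \eqref{eqn:lem_L_EL_key} equals $\<\dd{}\EL(e),v>+\<\dd{}L(e),v>$; and on the left-hand side $\<\RR E(\dd{}L(e)),\T_e\lambda_L(v)>=\<(\T_e\lambda_L)^\ast(\RR E(\dd{}L(e))),v>$ by the definition of the fibrewise transpose. Cancelling the common term $\<\dd{}L(e),v>$ from \eqref{eqn:lem_L_EL_key} and letting $v\in\T_eE$ vary yields $(\T_e\lambda_L)^\ast(\RR E(\dd{}L(e)))=\dd{}\EL(e)$, which is precisely the commutativity of \eqref{diag:L_to_EL}.

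I do not expect a genuine obstacle here; the computation is a two-liner once the right test curve is in hand. The only points that call for a word of care are that the test curve $s\mapsto(e(s),\lambda_L(e(s)))$ is admissible exactly because $\lambda_L$ is a morphism over $\id_M$ — this is what keeps it inside $C$ — and that $(\T\lambda_L)^\ast$, which is merely a differential relation when $\lambda_L$ fails to be invertible, nevertheless restricts fibre by fibre to the honest linear transpose $(\T_e\lambda_L)^\ast\colon\T^\ast_{\lambda_L(e)}E^\ast\to\T^\ast_eE$, so that the composite $(\T\lambda_L)^\ast\circ\RR E\circ\dd{}L\colon E\to\T^\ast E$ appearing in \eqref{diag:L_to_EL} is a well-defined map.
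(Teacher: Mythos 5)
Your proposal is correct and follows essentially the same route as the paper's own proof: both identify $\gra(\RR E)$ with the Lagrangian submanifold $C_f$ for the canonical pairing on $C=E\times_M E^\ast$, evaluate the resulting restriction condition on a tangent vector of the form $(v,\T\lambda_L(v))\in\T C$ (the paper states this abstractly, you realize it via the test curve $s\mapsto(e(s),\lambda_L(e(s)))$), and then conclude by unwinding $\EL=\<\lambda_L(\cdot),\cdot>-L$.
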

A simple proof is moved to Appendix~\ref{proof:L_to_EL}.

\begin{lemma}\label{lem:theta_unique}
An element $\theta_t:=\RR E(\dd{} L(\gamma(t)))$ is the unique element in $\T^\ast E^\ast$ which satisfies \eqref{eqn:EL_1} and \eqref{eqn:EL_2}. 
\end{lemma}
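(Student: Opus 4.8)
The plan is to handle the two halves of the claim separately: that $\theta_t:=\RR E(\dd{} L(\gamma(t)))$ really does satisfy \eqref{eqn:EL_1} and \eqref{eqn:EL_2}, and that it is the only element of $\T^\ast E^\ast$ that does so. Both parts will rest entirely on results already available, namely Lemma~\ref{lem:L_to_EL}, the leg identities \eqref{eqn:R_E_legs}, and the fact that $\RR E$ is a DVB isomorphism.

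Existence is immediate. Equation \eqref{eqn:EL_2} follows from the first identity in \eqref{eqn:R_E_legs}, since $\vertical{E^\ast}(\theta_t)=(\vertical{E^\ast}\circ\RR E)(\dd{} L(\gamma(t)))=\cotangent E(\dd{} L(\gamma(t)))=\gamma(t)$, the last equality holding because $\dd{} L(\gamma(t))$ is a covector based at $\gamma(t)$ by definition. Equation \eqref{eqn:EL_1} is exactly the commutativity of diagram \eqref{diag:L_to_EL} in Lemma~\ref{lem:L_to_EL} evaluated at the point $\gamma(t)$: $(\T\lambda_L)^\ast\theta_t=\big((\T\lambda_L)^\ast\circ\RR E\circ\dd{} L\big)(\gamma(t))=\dd{}\EL(\gamma(t))$.

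For uniqueness, let $\theta\in\T^\ast E^\ast$ satisfy both equations; put $e:=\gamma(t)$. Since $\RR E$ is a DVB isomorphism, write $\theta=\RR E(\alpha)$ for a unique $\alpha\in\T^\ast E$; it suffices to prove $\alpha=\dd{} L(e)$. From \eqref{eqn:EL_2} and the first identity in \eqref{eqn:R_E_legs} we obtain $\cotangent E(\alpha)=\vertical{E^\ast}(\RR E(\alpha))=\vertical{E^\ast}(\theta)=e$, so $\alpha\in\T^\ast_e E$. Now unwind \eqref{eqn:EL_1}. Because the Legendre map $\lambda_L$ need not be injective, $(\T\lambda_L)^\ast$ is only the dual \emph{relation} of the VB morphism $\T\lambda_L:\T E\ra\T E^\ast$ over $\lambda_L$, so the statement $\dd{}\EL(e)=(\T\lambda_L)^\ast\theta$ encodes two facts: a base-point compatibility, $\cotangent{E^\ast}(\theta)=\lambda_L(e)$ — equivalently, by the second identity in \eqref{eqn:R_E_legs}, $\vertical E(\alpha)=\dv L(e)$ — and a fibrewise pairing identity, $\<\dd{}\EL(e),X>=\<\theta,\T_e\lambda_L(X)>$ for every $X\in\T_e E$. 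The first of these fixes the $\vertical E$-component of $\alpha\in\T^\ast_e E$. Feeding $\theta=\RR E(\alpha)$ and $\cotangent{E^\ast}(\theta)=\lambda_L(e)$ into the second, and expanding $\EL(e)=\<\lambda_L(e),e>-L(e)$, the Leibniz rule together with the definition of $\RR E$ (through the Lagrangian submanifold $\gra(\RR E)=C_f$ for the canonical pairing $f$) collapses the pairing identity to $\<\alpha,X>=\<\dd{} L(e),X>$ for all $X\in\T_e E$, which fixes the remaining, ``base-covector'', component of $\alpha$. Altogether $\alpha=\dd{} L(e)$, i.e. $\theta=\theta_t$; this last reduction may alternatively be checked in a bundle chart on $E$ in one line.

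The delicate point — and the real content of the lemma — is the correct reading of \eqref{eqn:EL_1}: since $(\T\lambda_L)^\ast$ is a differential relation rather than a genuine map, the equation secretly carries the base-point constraint $\cotangent{E^\ast}(\theta)=\lambda_L(\gamma(t))$, and it is precisely this constraint that removes the freedom in the ``$E^\ast$-direction'' that \eqref{eqn:EL_2} alone does not control. Verifying that \eqref{eqn:EL_1} and \eqref{eqn:EL_2} together leave no residual slack at all — not merely that $\theta_t$ solves them — is the crux; everything else is bookkeeping supplied by Lemma~\ref{lem:L_to_EL} and the identities \eqref{eqn:R_E_legs}.
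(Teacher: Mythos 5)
Your proof is correct. The existence half is exactly the paper's: \eqref{eqn:EL_1} from the commutativity of \eqref{diag:L_to_EL} and \eqref{eqn:EL_2} from the first identity in \eqref{eqn:R_E_legs}. The uniqueness half, however, takes a genuinely different route. The paper writes any competing solution as $\theta_t+\theta'$ with $\theta'\in\T^\ast_{\lambda_L(\gamma(t))}E^\ast$ satisfying $\vertical{E^\ast}(\theta')=0$ and $\theta'\in\Ann\big(\operatorname{im}\T_{\gamma(t)}\lambda_L\big)$, and then kills $\theta'$ by a short coordinate check (a covector with vanishing $\vertical{E^\ast}$-leg is pulled back from $M$, while $\T\lambda_L$ covers all $\T M$-directions because $\lambda_L$ is fibered over $\id_M$). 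You instead set $\theta=\RR E(\alpha)$ and evaluate the defining property of $\gra(\RR E)=C_f$ on pairs $\big(X,\T_e\lambda_L(X)\big)\in\T(E\times_M E^\ast)$; combined with the fibrewise part of \eqref{eqn:EL_1} and the identity $\dd{}\<\cdot,\cdot>\big(X,\T\lambda_L(X)\big)=\<\dd{}\EL(e),X>+\<\dd{} L(e),X>$ this yields $\<\alpha,X>=\<\dd{} L(e),X>$ for all $X\in\T_e E$, hence $\alpha=\dd{} L(e)$ directly. Your terse ``collapse'' step is precisely the computation in the Appendix proof of Lemma~\ref{lem:L_to_EL} read backwards, so it is sound and at the same level of detail as the paper's ``easy coordinate calculation''. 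What each buys: the paper's argument is a two-line linear-algebra check in a chart; yours is coordinate-free and makes explicit a point the paper uses only implicitly, namely that since $(\T\lambda_L)^\ast$ is a dual relation rather than a map, equation \eqref{eqn:EL_1} already encodes the base-point constraint $\cotangent{E^\ast}(\theta)=\lambda_L(\gamma(t))$, which is what removes the freedom in the $E^\ast$-direction that \eqref{eqn:EL_2} does not control. (A minor imprecision: $(\T\lambda_L)^\ast$ is a relation because $\lambda_L$ need not be a diffeomorphism, not specifically because it may fail to be injective; this does not affect your argument.)
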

\begin{proof} The fact that such a $\theta_t$ satisfies \eqref{eqn:EL_1} and \eqref{eqn:EL_2} is obvious: \eqref{eqn:EL_1} follows from the commutativity of \eqref{diag:L_to_EL}, while \eqref{eqn:EL_2} from \eqref{eqn:R_E_legs}: $\vertical {E^\ast}(\RR E(\dd{} L(\gamma(t))))=\cotangent{E}(\dd{}L(\gamma(t)))=\gamma(t)$.

Now any other element with these properties would be of the form $\theta_t+\theta'$, where $\theta'\in\T^\ast_{\lambda_L(\gamma(t))}E^\ast$ is such that  $\vertical{E^\ast}(\theta')=0$ and $\theta'\in \Ann(\T\lambda_L\big|_{\gamma(t)})$. It takes an easy coordinate calculation to show that such a $\theta'$ must be a null covector.  \end{proof}
Finally we are ready to close our considerations about the Lagrangian formalism in the prolongation approach.  
\begin{theorem}
\label{thm:lagrangian}
Equations \eqref{eqn:EL} and \eqref{eqn:EL_prolong} are equivalent. More precisely, the prolongation approach to Lagrangian mechanics ``hides'' inside itself the following diagram: 
\begin{equation}
\label{diag:lagrangian}
\xymatrix{&\T E^\ast && \T^\ast E^\ast\ar[ll]_{\wt{\Lambda_{E^\ast}}}\ar@{-->}[d]_{(\T\lambda_L)^\ast} && *+[F--]{\T^\ast E} \ar[ll]_{\RR E}\ar@/_1.5pc/[llll]_{\eps E}\\
[0,T]\ar@{..>}[r]^{\T\gamma}&\T E \ar@{-->}[u]^{\T\lambda_L}&& \T^\ast E && E\ar@{}[llu]|{\eqref{diag:L_to_EL}} \ar@{-->}[ll]_{\dd{} \EL}\ar@{-->}[u]^{\dd{}L}
&[0,T]\ .\ar@{..>}[l]_{\gamma}}
\end{equation}
\end{theorem}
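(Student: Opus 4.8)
The plan is to read the theorem off three facts already established --- the Proposition characterising the solutions of \eqref{eqn:EL_prolong} by the system \eqref{eqn:EL_1}--\eqref{eqn:EL_3}, Lemma~\ref{lem:L_to_EL}, and Lemma~\ref{lem:theta_unique} --- the only extra input being that, once $\theta_t$ has been pinned down, equation \eqref{eqn:EL_3} is nothing but \eqref{eqn:EL} written out. So first I would take a curve $\gamma:[0,T]\ra E$ solving \eqref{eqn:EL_prolong}; in particular $\gamma$ is admissible, so $(\gamma(t),\dot\gamma(t))\in\TT EE$, and the Proposition supplies, for each $t$, an element $\theta_t\in\T^\ast E^\ast$ satisfying \eqref{eqn:EL_1}--\eqref{eqn:EL_3}. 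By Lemma~\ref{lem:theta_unique} this element is forced: $\theta_t=\RR E(\dd{}L(\gamma(t)))$. Hence \eqref{eqn:EL_1} and \eqref{eqn:EL_2} carry no new information in this direction --- through Lemma~\ref{lem:L_to_EL} and \eqref{eqn:R_E_legs} they merely re-encode this identification of $\theta_t$.

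All the content is thus in \eqref{eqn:EL_3}. Substituting $\theta_t=\RR E(\dd{}L(\gamma(t)))$ and recalling that $\eps E=\RR E\circ\wt{\Lambda_{E^\ast}}$, i.e. $\eps E(\dd{}L(e))=\wt{\Lambda_{E^\ast}}\big(\RR E(\dd{}L(e))\big)$ for $e\in E$, equation \eqref{eqn:EL_3} becomes
\[
\T\lambda_L(\dot\gamma(t))=\wt{\Lambda_{E^\ast}}(\theta_t)=\eps E(\dd{}L(\gamma(t)))\ .
\]
Since $\lambda_L=\dv L$, the chain rule identifies the left-hand side with $\ddt\dv L(\gamma(t))$, so the displayed equality is precisely \eqref{eqn:EL}. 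For the converse, a curve $\gamma$ solving \eqref{eqn:EL} is automatically admissible (as recalled in the discussion after \eqref{diag:EL}); setting $\theta_t:=\RR E(\dd{}L(\gamma(t)))$, the easy half of Lemma~\ref{lem:theta_unique} yields \eqref{eqn:EL_1} and \eqref{eqn:EL_2}, and \eqref{eqn:EL_3} is just \eqref{eqn:EL} read through this choice of $\theta_t$, so the Proposition returns \eqref{eqn:EL_prolong}. This establishes the equivalence.

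Diagram \eqref{diag:lagrangian} is then merely a transcription of the argument inside \eqref{diag:omega_L}: the framed corner $\T^\ast E$ together with the top arrows $\RR E$ and $\wt{\Lambda_{E^\ast}}$ (composing to $\eps E$) is the Tulczyjew-triple leg extracted from $(\wt{\Omega_E})^{-1}$ by Theorem~\ref{thm:hamilton}; the inner square labelled \eqref{diag:L_to_EL} commutes by Lemma~\ref{lem:L_to_EL}; and the two dotted curves $\T\gamma$ and $\gamma$, together with the outer commutativity of the diagram, encode exactly \eqref{eqn:EL_3}, i.e. \eqref{eqn:EL}. So beyond chasing \eqref{diag:omega_L} and quoting the two lemmata, nothing further is required.

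The step I expect to be the real obstacle --- and essentially the only non-formal one --- is the uniqueness assertion in Lemma~\ref{lem:theta_unique}: it is what forces $\theta_t=\RR E(\dd{}L(\gamma(t)))$ and thereby makes the equivalence hold even for non-regular $L$. Were uniqueness to fail, solving \eqref{eqn:EL_prolong} would only produce \emph{some} $\theta_t$ fulfilling \eqref{eqn:EL_1}--\eqref{eqn:EL_3}, and that conjunction would be strictly weaker than \eqref{eqn:EL} --- precisely the phenomenon flagged in the remarks after \eqref{diag:EL}, where it is observed that $\eps E(\dd{}L(E))$ fails to determine the dynamics unless the Legendre map is a local diffeomorphism. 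A secondary point is to keep apart the two readings of \eqref{eqn:EL} --- as an equation for the momentum curve $\dv L(\gamma(t))\in E^\ast$, or as an equation for $\gamma$ itself; this is reconciled by the chain-rule identity $\ddt\dv L(\gamma(t))=\T\lambda_L(\dot\gamma(t))$ used above, together with the fact that admissibility of $\gamma$ is already built into both \eqref{eqn:EL} and \eqref{eqn:EL_prolong}.
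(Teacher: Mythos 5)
Your proposal is correct and follows essentially the same route as the paper: invoke the Proposition reducing \eqref{eqn:EL_prolong} to \eqref{eqn:EL_1}--\eqref{eqn:EL_3}, use Lemma~\ref{lem:theta_unique} to force $\theta_t=\RR E(\dd{}L(\gamma(t)))$, identify $\T\lambda_L(\dot\gamma(t))$ with $\ddt\dv L(\gamma(t))$ so that \eqref{eqn:EL_3} becomes \eqref{eqn:EL} via $\eps E=\wt{\Lambda_{E^\ast}}\circ\RR E$, and obtain \eqref{diag:lagrangian} by composing \eqref{diag:omega_L} with \eqref{diag:L_to_EL}. Your explicit treatment of the converse direction only spells out what the paper leaves implicit.
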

\begin{proof} By Lemma~\ref{lem:theta_unique} we know that under \eqref{eqn:EL_1} and \eqref{eqn:EL_2}, necessarily $\theta_t=\RR E(\dd{} L(\gamma(t)))$ and thus \eqref{eqn:EL_3} reads as
$$\ddt \dv L(\gamma(t))=\T\lambda_L(\dot\gamma(t))=\wt{\Lambda_{E^\ast}}(\RR E(\dd{} L(\gamma(t))))=\eps E(\dd{}L(\gamma(t)))\ ,$$
and so \eqref{eqn:EL} holds. 

Diagram \eqref{diag:lagrangian} is obtained from \eqref{eqn:omega_stripped} after composing diagram \eqref{diag:omega_L} with diagram \eqref{diag:L_to_EL}. By definition $\eps E:=\wt{\Lambda_{E^\ast}}\circ\RR E$, which explains the up-most arrow.  
\end{proof}
As before, in diagram \eqref{diag:lagrangian} the solid arrows are given by the geometry, the dashed ones by the Lagrangian and the dotted ones by the actual curve. As we see, the Lagrangian side of the Tulczyjew triple \eqref{diag:TT} ``sits'' inside the prolongation approach. Its presence is \emph{a priori} not that obvious, as it requires an addition of an extra $\T^\ast E$-nod (the framed entry in \eqref{diag:lagrangian}). In fact in the Tulczyjew triple context it is much more natural to view \eqref{diag:lagrangian} as the Lagrangian side of the Tulczyjew triple (upper row and the $\dd{} L$ arrow) sandwiched between two non-canonical maps: $\T\lambda_L$ and $(\T\lambda_L)^\ast$. This explains the presence of the $\dd{} \EL$-arrow as the natural closure of the diagram (this follows from Lemma~\ref{lem:L_to_EL}).

\subsection{The prolongation Tulczyjew triple}
\label{ssec:prolong_tt}
The Hamiltonian formalism in the prolongation setting is quite elegant. On the other hand, the construction of the Lagrangian formalism featured the usage of the Legendre map, which lead to a somehow unnatural (non-canonical) construction \eqref{diag:lagrangian}. Such a dichotomy between the Lagrangian and Hamiltonian formalisms is, however, not present in the  Tulczyjew triple approach, where the passage from the Hamiltonian to the Lagrangian formalism can be described as follows: we transform the DVB $\T^\ast E^\ast$ to $\T^\ast E$ via the canonical isomorphism $\RR E$, and then change the generating map of the dynamics from $\dd{} H:E^\ast\ra \T^\ast E^\ast$ to $\dd{} L:E\ra \T^\ast E$. There are \emph{a priori} no obstructions to repeat this procedure in the prolongation approach. This will eventually lead to an alternative formulation of the Lagrangian formalism in the language of prolongations. 

\paragraph{The canonical isomorphism on prolongations} 
We shall begin by adopting the canonical isomorphism $\RR E$ to the prolongation context. 

\begin{lemma_def}\label{lem:prolong_isom}
Spaces $\dual{(\TT E E)}$ and $\dual{(\TT{E^\ast}E)}$ are canonically isomorphic via the trivial extension of $\RR E$.
\end{lemma_def}
\begin{proof} Recall from \eqref{eqn:TPE_dual} that the spaces in consideration are characterized as push-outs of $E^\ast$ and, respectively, $\T^\ast E$ and $\T^\ast E^\ast$. Now, the latter spaces are isomorphic via $\RR E$, which additionally preserves the inclusions $\T^\ast M\hookrightarrow \T^\ast E$ and $\T^\ast M\hookrightarrow \T^\ast E^\ast$ (these are the \emph{cores} of the respective DVB structures -- see \cite{Konieczna_Urbanski_1999}). As a result we obtain an isomorphism of push-outs $\wt{\RR E}:\dual{(\TT E E)}\ra \dual{(\TT {E^\ast}E)}$ canonically closing the following diagram:
$$\xymatrix{
\dual{(\TT E E)}\ar@/^1.5pc/@{-->}[rrrr]^{\wt{\RR E}} && \T^\ast E \ar@{-|>}_{\dual{(\liftanch{\anchor}{{E}})}}[ll]\ar@/^1.5pc/[rrrr]^{\RR E} &&\dual{(\TT{E^\ast} E)}  && \T^\ast E^\ast \ar@{-|>}_{\dual{(\liftanch{\anchor}{E^\ast})}}[ll]\\
E^\ast\ar@/_1.5pc/[rrrr]^{=}\ar@{-|>}[u]_{\dual{(\down{\bundle}E)}}&&\T^\ast M \ar@/_1.5pc/[rrrr]^{=} \ar@{-|>}[ll]_{\dual{\anchor}}\ar@{-|>}[u]_{\dual{(\T{\bundle})}}&& E^\ast \ar@{-|>}[u]_{\dual{(\down{(\dual{\bundle})}E)}}&& \T^\ast M\ar@{-|>}[ll]_{\dual{\anchor}}\ar@{-|>}[u]_{\dual{(\T\dual{\bundle})}}\ .
}$$
\end{proof}

\paragraph{Another formulation of mechanics}
Our previous results immediately lead to the following.
\begin{proposition}
\label{prop:lagrangian} 
Define $\wt{\eps E}:\dual{(\TT E E)}\ra \TT{E^\ast}E$ as the composition $\wt{\eps E}:=\wt {\RR E}\circ(\wt{\Omega_E})^{-1}$. Under canonical identifications $\I{E}$ and $\I{E^\ast}$ the map
$$\xymatrix{
E^\ast\times_M\T^\ast E \ar[r]^>>>>>{\dual{(\I{E})}}&(\TT EE)^\ast\ar[rr]^{\wt{\eps E}}&&\TT{E^\ast}E \ar[r]^>>>>{\I{E^\ast}}&E\times_M\T E^\ast\ ,
}$$
reads as
\begin{equation}
\label{eqn:epsilon_stripped}
(\xi,\theta)\longmapsto (\cotangent{E}(\theta),\eps E(\theta)-\V\xi)\ .
\end{equation}
\end{proposition}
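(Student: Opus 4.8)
The plan is to read \eqref{eqn:epsilon_stripped} off from Theorem~\ref{thm:hamilton} with essentially no new computation, the only thing to check being how the trivial extension $\wt{\RR E}$ of $\RR E$ interacts with the canonical inclusion $\dual{(\I E)}$. First I would make the composition in the statement explicit. By the very construction of $\wt{\RR E}$ in Lemma/Definition~\ref{lem:prolong_isom} it acts as the identity on the $E^\ast$-leg of the push-out \eqref{eqn:TPE_dual} and as $\RR E$ on the $\T^\ast E$-leg; since any element of $(\TT E E)^\ast$ is of the form $\dual{(\I E)}(\xi,\theta)=\dual{(\down{\bundle}E)}\xi+\dual{(\liftanch{\anchor}{E})}\theta$, applying $\wt{\RR E}$ leg-by-leg gives $\dual{(\down{(\dual{\bundle})}E)}\xi+\dual{(\liftanch{\anchor}{E^\ast})}(\RR E\theta)=\dual{(\I{E^\ast})}(\xi,\RR E\theta)$, i.e.
\[
\wt{\RR E}\circ\dual{(\I E)}=\dual{(\I{E^\ast})}\circ(\id_{E^\ast}\times\RR E)\ .
\]
Unwinding the definition of $\wt{\eps E}$ together with this identity, the composition in the Proposition equals $\I{E^\ast}\circ(\wt{\Omega_E})^{-1}\circ\dual{(\I{E^\ast})}\circ(\id_{E^\ast}\times\RR E)$.

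Then I would simply invoke Theorem~\ref{thm:hamilton}, according to which $\I{E^\ast}\circ(\wt{\Omega_E})^{-1}\circ\dual{(\I{E^\ast})}$ is the map $(\xi,\eta)\mapsto(\vertical{E^\ast}(\eta),\wt{\Lambda_{E^\ast}}(\eta)-\V\xi)$. Precomposing with $\id_{E^\ast}\times\RR E$ thus yields $(\xi,\theta)\mapsto(\vertical{E^\ast}(\RR E(\theta)),\wt{\Lambda_{E^\ast}}(\RR E(\theta))-\V\xi)$, and it remains only to rewrite the two components using facts already at hand: the first identity in \eqref{eqn:R_E_legs} gives $\vertical{E^\ast}\circ\RR E=\cotangent{E}$, so the first component is $\cotangent{E}(\theta)$, and $\eps E=\wt{\Lambda_{E^\ast}}\circ\RR E$ by definition, so the second component is $\eps E(\theta)-\V\xi$. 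This is precisely \eqref{eqn:epsilon_stripped}.

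I do not expect a genuine obstacle here: the Proposition is in effect a corollary of Theorem~\ref{thm:hamilton}. The only step that deserves a sentence of justification is the displayed identity $\wt{\RR E}\circ\dual{(\I E)}=\dual{(\I{E^\ast})}\circ(\id_{E^\ast}\times\RR E)$, which relies on $\wt{\RR E}$ being well defined on the push-out \eqref{eqn:TPE_dual} -- equivalently, on $\RR E$ preserving the common core $\T^\ast M$ -- so that the prescription does not depend on the non-unique presentation of an element of $(\TT E E)^\ast$ as $\dual{(\down{\bundle}E)}\xi+\dual{(\liftanch{\anchor}{E})}\theta$; this well-definedness is exactly the content of Lemma/Definition~\ref{lem:prolong_isom}.
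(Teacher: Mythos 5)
Your proposal is correct and follows exactly the route the paper intends: the paper offers no separate argument for Proposition~\ref{prop:lagrangian} beyond ``our previous results immediately lead to the following,'' and your write-up simply makes explicit the two ingredients involved, namely the compatibility $\wt{\RR E}\circ\dual{(\I E)}=\dual{(\I{E^\ast})}\circ(\id_{E^\ast}\times\RR E)$ coming from Lemma/Definition~\ref{lem:prolong_isom} (well-definedness via $\RR E$ preserving the core $\T^\ast M$) and the formula of Theorem~\ref{thm:hamilton}, finished off with \eqref{eqn:R_E_legs} and $\eps E=\wt{\Lambda_{E^\ast}}\circ\RR E$. No gaps; this is essentially the same (and the only natural) argument.
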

Obviously, we can use mappings $(\wt{\Omega_E})^{-1}$, $\wt{\eps E}$ and $\wt{\RR E}$ to build a \emph{prolongation analog of the Tulczyjew triple} and use it in the geometric formulation of mechanics analogous to the one in \eqref{diag:TT}. This construction appeared already in \cite{LMM_2005}, however only as a generalization of the classical Tulczyjew triple, and not a framework of geometric mechanics. 
\begin{corollary}
\label{cor:prolong_tt}
The \emph{prolongation Tulczyjew triple} is the following commutative diagram consisting in the upper row of DVB isomorphisms. 
$$
\xymatrix{
&\dual{(\TT E E)} \ar@/^2pc/[rrrrrr]^{\wt{\RR E}}_{\simeq}\ar[rrr]^{\wt{\eps E}}_{\simeq}&&&\TT{E^\ast}E\ar[dl]_{\liftanch{\anchor}{E^\ast}}\ar[d]&&&\dual{(\TT{E^\ast}E)}\ar[lll]_{(\wt{\Omega_E})^{-1}}^{\simeq}\\
*+[F]{\T^\ast E}\ar[d]\ar[ur]\ar[dr]^{\vertical E} & E^\ast \ar[u] &&*+[F]{\T E^\ast}\ar[d] & E && E^\ast \ar[ur]&*+[F]{\T^\ast E^\ast}\ar[u]\ar[d]\\
E \ar@{-->}@/^1pc/[u]^{\dd{} L}\ar@{-->}[ur]^<<<<<<{0}& E^\ast\ar[rr]^{=} &&{E^\ast}&&&& E^\ast\ar[llll]_{=}\ar@{-->}@/_1pc/[u]_{\dd{}H}\ar@{-->}[ul]_{0}\\
&(Lagr) &&& \quad(Dyn) &&& (Ham)
}$$
Formulas~\eqref{eqn:omega_stripped} and \eqref{eqn:epsilon_stripped} show that this structure is nothing more but a simple extension of the Tulczyjew triple of the initial algebroid \eqref{diag:TT} (framed entries). A (disputable) gain is the fact that the new triple consists of DVB isomorphisms, not just DVB morphisms. 

The Lagrangian dynamics is generated by a map $\wt{\dd{} L}:=(\dd{} L,0):E\ra \T^\ast E\times_M E^\ast$. That is a curve $\gamma:[0,t]\ra E$ is a solution of the EL equation if the composition $\liftanch{\anchor}{E^\ast}(\wt{\eps E}(\wt{\dd{} L}(\gamma)))$ is the tangent prolongation of $\dv L(\gamma)$. In an analogous manner, the Hamiltonian dynamics is generated by $\wt{\dd{} H}:=(\dd{}H,0):E^\ast\ra \T^\ast E^\ast\times_ME^\ast$. 
\end{corollary}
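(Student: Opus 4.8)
The plan is to obtain the corollary by assembling the results already in hand, so that essentially no new computation is required; the only real work is bookkeeping with the canonical embeddings $\I E$, $\I{E^\ast}$ and their duals. First I would check that the three horizontal maps in the upper row are DVB isomorphisms. For $\wt{\RR E}$ this is exactly Lemma/Definition~\ref{lem:prolong_isom}. For $(\wt{\Omega_E})^{-1}$ one already knows that $\wt{\Omega_E}$ is a VB isomorphism, and it is moreover a \emph{DVB} morphism: by \eqref{eqn:omega_stripped} its conjugate $\I{E^\ast}\circ(\wt{\Omega_E})^{-1}\circ\dual{(\I{E^\ast})}$ is assembled from the DVB morphisms $\vertical{E^\ast}$, $\wt{\Lambda_{E^\ast}}$ and a vertical lift, hence is a DVB morphism, and since $\dual{(\I{E^\ast})}$ is a DVB epimorphism and $\I{E^\ast}$ a DVB embedding, so is $(\wt{\Omega_E})^{-1}$ itself. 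As their composite, $\wt{\eps E}$ of Proposition~\ref{prop:lagrangian} is then a DVB isomorphism as well. Commutativity of the whole diagram rests on two facts: the definition of $\wt{\eps E}$ as the composite of $\wt{\RR E}$ and $(\wt{\Omega_E})^{-1}$ (which accounts for the curved $\wt{\RR E}$-arrow), and the fact that these three DVB morphisms intertwine the leg projections of the prolongations --- in particular the anchor projection $\liftanch{\anchor}{E^\ast}:\TT{E^\ast}E\to\T E^\ast$ --- which is part of being a DVB morphism and, on the distinguished legs, is pinned down by \eqref{eqn:omega_stripped}, \eqref{eqn:epsilon_stripped} and the leg identities \eqref{eqn:R_E_legs}.

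Next I would justify the phrase ``simple extension of the Tulczyjew triple \eqref{diag:TT}''. Under $\dual{(\I{E^\ast})}$ the framed leg $\T^\ast E^\ast\subset\dual{(\TT{E^\ast}E)}$ is the image of the covectors $(0,\theta)$, and by \eqref{eqn:omega_stripped} the restriction of $(\wt{\Omega_E})^{-1}$ to it is $\wt{\Lambda_{E^\ast}}$, landing in the framed leg $\T E^\ast\subset\TT{E^\ast}E$; likewise, by \eqref{eqn:epsilon_stripped} the restriction of $\wt{\eps E}$ to $\T^\ast E\subset\dual{(\TT EE)}$ is $\eps E$, and by the construction in Lemma/Definition~\ref{lem:prolong_isom} the restriction of $\wt{\RR E}$ to that leg is $\RR E$. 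Hence the sub-diagram spanned by the framed entries $\T^\ast E$, $\T E^\ast$, $\T^\ast E^\ast$ together with the $\dd{} L$- and $\dd{} H$-arrows \emph{is} diagram \eqref{diag:TT}.

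It remains to read off the dynamics. On the Hamiltonian side $\liftanch{\anchor}{E^\ast}\bigl((\wt{\Omega_E})^{-1}\dual{(\I{E^\ast})}(0,\dd{}H(\eta))\bigr)$ equals, by \eqref{eqn:hamilton_stripped}, the $\T E^\ast$-component of $\Xi_H(\eta)=(\dv H(\eta),\X_H(\eta))$, that is the Hamiltonian vector field $\X_H(\eta)$; so the dynamics generated by $\wt{\dd{}H}=(\dd{}H,0)$ is the Hamiltonian dynamics of Subsection~\ref{ssec:hamilton}. On the Lagrangian side, $\wt{\eps E}$ being now a genuine map, $e\mapsto\liftanch{\anchor}{E^\ast}\bigl(\wt{\eps E}(\dual{(\I E)}(0,\dd{}L(e)))\bigr)$ is a well-defined map $E\to\T E^\ast$, and by \eqref{eqn:epsilon_stripped} with $\xi=0$ and $\theta=\dd{}L(e)$ it equals $e\mapsto\eps E(\dd{}L(e))$, a tangent vector based at $\dv L(e)$. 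Hence the requirement that $\liftanch{\anchor}{E^\ast}(\wt{\eps E}(\wt{\dd{}L}(\gamma(t))))$ be the tangent prolongation of $\dv L(\gamma(t))$ is precisely equation \eqref{eqn:EL}, which by Theorem~\ref{thm:lagrangian} is equivalent to the Euler--Lagrange equation \eqref{eqn:EL_prolong}. This finishes the proof.

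I expect the only genuinely delicate point to be the verification that $\wt{\Omega_E}$ is a \emph{DVB} (not merely VB) isomorphism, together with a careful matching of the four identifications $\I E$, $\I{E^\ast}$, $\dual{(\I E)}$, $\dual{(\I{E^\ast})}$ so that ``the leg $\T E^\ast$'' denotes the same sub-DVB on both sides of $\wt{\eps E}$; neither is deep, but both invite direction errors if done in haste. Everything else is a transcription of Theorem~\ref{thm:hamilton}, Proposition~\ref{prop:lagrangian} and Theorem~\ref{thm:lagrangian}.
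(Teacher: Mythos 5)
Your proposal is correct and follows essentially the route the paper intends: the corollary is stated without a separate proof precisely because it is an assembly of Theorem~\ref{thm:hamilton} (formula \eqref{eqn:omega_stripped}/\eqref{eqn:hamilton_stripped}), Lemma/Definition~\ref{lem:prolong_isom}, Proposition~\ref{prop:lagrangian} (formula \eqref{eqn:epsilon_stripped}) and Theorem~\ref{thm:lagrangian}, which is exactly what you do. Your extra care about $\wt{\Omega_E}$ being a DVB (not merely VB) isomorphism and about matching the identifications $\I E$, $\I{E^\ast}$ and their duals is a sound, if implicit in the paper, piece of bookkeeping and does not change the argument.
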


\section{Conclusions}\label{sec:concl}
Our aim in this work was to discuss a geometric relation between the prolongation and Tulczyjew triple approaches to Geometric Mechanics. Our study (in Theorem~\ref{thm:hamilton}) reveals that in the Hamiltonian case the prolongation approach is rooted on a simple extension of the DVB morphism $\wt{\Lambda_{E^\ast}}:\T^\ast E^\ast\ra \T E^\ast$, which constitutes the Hamiltonian side of the Tulczyjew triple \eqref{diag:TT}. In the Lagrangian case (Theorem~\ref{thm:lagrangian}) we have a similar situation with the DVB morphism $\eps E:\T^\ast E\ra \T E^\ast$, but the latter is additionally sandwiched between the lifts of the Legendre map. (The latter step can be avoided -- see Proposition~\ref{prop:lagrangian}.) Since the Tulczyjew triple is present inside the prolongation approach, and not \emph{vice versa}, it is clear that the former approach is {\new primary to} the latter. Other arguments also favor the Tulczyjew triple approach:
\begin{itemize}
	\item The main gain of using the prolongation approach is that it allows to formulate the equations in (pre)symplectic terms. Yet this comes at a price of extending the space, say, from $\T E^\ast$ to $\TT {E^\ast} E$ , and at the end of the day we need to project the (pre)symplectic dynamics back to $\T E^\ast$, so actually the (pre)symplecicity is lost.				
	\item To build an algebroid structure on the prolongation (which is used in equations \eqref{eqn:ham_prolong} and \eqref{eqn:EL_prolong}) we need the initial algebroid structure to be of class almost-Lie. This follows directly from the proof of Lemma~\ref{lem:prolong}. On the other hand, the Tulczyjew triple approach allows for a direct generalization for a much wider class of \emph{general algebroids} -- see \cite{GGU_2006}. 
	\item	Equation~\eqref{eqn:EL} actually deconstructs the process of calculating the Euler-Lagrange equations from the variational principle. Indeed, map $\eps E$ can be interpreted as a geometric counterpart of the  procedure of extracting a generator (virtual displacement) from a variation, and the condition of the compatibility of curves $\eps E(\dd{} L(\gamma))$ and $\dv L(\gamma)$ corresponds to the procedure of integration by parts -- see \cite{GG_2008} and also \cite{MJ_MR_2014}. Such an interpretation is not present or at least not known in the case of the prolongation approach. 
	\item This may seem only as a matter of esthetics, but in the Lagrangian case the Tulczyjew triple approach seems to be simpler. Indeed, in order to obtain the equations of motion we use a canonical object $\eps E$ and a non-canonical one $\dd{} L$, whereas in the prolongation approach we need two non-canonical objects $\omega_L$ and $\dd{}E_L$.  
\end{itemize}

\section*{Acknowledgments}
\addcontentsline{toc}{section}{Acknowledgments}
This research was supported by the Polish National Science Center (grant DEC-2012/06/A/ST1/00256). {\new I would like to thank professor Janusz Grabowski for pointing out a mistake in the previous version of this manuscript.} 

\appendix
\section{Proofs of some results}

\paragraph{Details of the proof of Lemma~\ref{lem:prolong}.}\label{par:proof_prolong}
Let us prove that formula \eqref{eqn:d_TT_PP_E} provides a reasonable definition of $\dd{\TT\PP E}\Xi$.

First, note that an element $\zeta\in\Cf(P)\otimes \Sec(E^\ast)$ may have many presentations in the form $\zeta=f_i\cdot e^i$, with $f_i\in\Cf(P)$ and $e^i\in\Sec(E^\ast)$.\footnote{From now on the Einstein's summation convention is assumed.} Let us calculate an evaluation of a 2-$\TT\PP E$-form $\dd{\TT \PP E}[\dual{(\down{\fibration}E)}\zeta]=\dd{\TT\PP E}f_i\wedge\dual{(\down{\fibration} E)}e^i+f_i\dual{(\down{\fibration}E)} \left[\dd E e^i\right]$ on a pair of vectors $(e,X),(e',X')\in(\TT\PP E)_p$. With some abuse of notation let us extend these elements to local sections $(e, X),(e',X')\in\Sec(\TT\PP E)$  such that $e=\dual{\fibration}\wt e$ and $e'=\dual{\fibration}\wt e'$ for some $\wt e,\wt e'\in\Sec(E)$ (i.e. sections $e$ and $e'$ are constant on fibers of $\fibration$, so they project to sections $\wt e$ and $\wt e'$ of $\bundle$). Now
\begin{align*}
&\left[\dd{\TT\PP E}f_i\wedge\dual{(\down{\fibration} E)}e^i+f_i\dual{(\down{\fibration}E)} \left[\dd E e^i\right]\right]((e,X),(e',X'))=\\ 
&X(f_i)\<e^i,\wt e'>-X'(f_i)\<e^i,\wt e>+f_i\dd Ee^i(\wt e,\wt e')\overset{\eqref{eqn:cartan_formula}}=\\
&X(f_i)\<e^i,\wt e'>-X'(f_i)\<e^i,\wt e>+f_i\left(\anchor(\wt e)\<e^i,\wt e'>-\anchor(\wt e')\<e^i,\wt e>-\<e^i,[\wt e,\wt e']>\right)=\\
&X\left(f_i\<e^i,\wt e'>\right)-X'\left(f_i\<e^i,\wt e>\right)-\<f_ie^i,[\wt e,\wt e']>=X\<\zeta,\wt e'>-X'\<\zeta,\wt e>-\<\zeta,[\wt e,\wt e']>\ ,
\end{align*}
where the passage to the last line is due to the fact that $\T\fibration (X)=\anchor(\wt e)$ and $\T\fibration (X')=\anchor(\wt e')$. It is now clear that this expression does not depend on the chosen decomposition of $\zeta$. Moreover, we have proved the following formula:

\begin{equation}
\label{eqn:dd_prolong}
\dd{\TT \PP E}[\dual{(\down{\fibration}E)}\zeta]((e,X),(e',X'))=X\<\zeta,\wt e'>-X'\<\zeta,\wt e>-\<\zeta,[\wt e,\wt e']>\ ,
\end{equation}
where the extensions $\wt e$ and $\wt e'$ are as described above. 
\medskip

Finally, since the kernel of $\dual{(\I{\PP})}$ consists of elements of $\T^\ast M$, the decomposition 
$$\Xi=\dual{(\down{\fibration}E)}\zeta+\dual{(\liftanch{\anchor}{\PP})}\theta $$
is determined up to a $\Cf(\PP)$-span of one-forms on $M$. Thus we are left with checking if $\dual{(\down{\fibration}E)} \left[\dd E\xi\right]=-\dual{(\liftanch{\anchor}{\PP})}\left[\dd{}\theta\right]$ for $\xi=\dual{\anchor}\alpha$ and $-\theta=\dual{(\T\PP)}\alpha$ for $\alpha\in\Omega^1(M)$. For sections $(e,X),(e',X')\in \Sec(\TT\PP E)$ as before, we have
$$
\dual{(\down{\fibration}E)} \left[\dd E\dual{\anchor} \alpha\right]((e,X),(e',X))=\dd E(\dual{\anchor} \alpha)(\wt e,\wt e')\overset{\eqref{eqn:cartan_formula}}=\anchor(\wt e)\alpha(\anchor(\wt e'))-\anchor(\wt e')\alpha(\anchor(\wt e))-\alpha(\anchor[\wt e,\wt e'])\ .
$$
On the other hand, since $\T\fibration (X)=\anchor(\wt e)$ and $\T\fibration (X')=\anchor(\wt e')$, 
\begin{align*}
&\dual{(\liftanch{\anchor}{\PP})}\left[\dd{}\dual{(\T\fibration)}\alpha\right]((e,X),(e',X'))=\dd{}(\T\fibration)^\ast\alpha(X,X')=\\
&\dd{} \alpha(\T\fibration(X),\T\fibration(X'))=\dd{}\alpha(\anchor(\wt e),\anchor(\wt e'))=\anchor(\wt e)\alpha(\anchor(\wt e'))-\anchor(\wt e')\alpha(\anchor(\wt e))-\alpha([\anchor(\wt e),\anchor(\wt e')]_{\T M})\ .
\end{align*}
By the almost-Lie condition \eqref{eqn:AL} the two above expressions are equal, which shows that the derivation $\dd{\TT\PP E}$ is well-defined. Finally note that by construction the new anchor $\liftanch{\anchor}{\PP}$ is an algebroid morphism, i.e. the construed algebroid structure on $\TT\PP E$ is almost-Lie -- cf. the remark following Definition~\ref{def:algebroid_morphism}. 
\qed

\paragraph{Proof of Theorem~\ref{thm:hamilton}}\label{proof:hamilton}

First let us observe that $\mu_E=\dual{(\down{\fibration}E)}\id_{E^\ast}$, where $\id_{E^\ast}:E^\ast\ra E^\ast$ is the identity map. Since, $\Omega_E=-\dd{\TT{E^\ast}E}\mu_E$, formula \eqref{eqn:dd_prolong} gives us
\begin{equation}
\label{eqn:omega_evaluation}
\Omega_E((e,X),(e',X'))=\iota_{[\wt e,\wt e']}-X(\iota_{\wt e'})+X'(\iota_{\wt e})=\iota_{[\wt e,\wt e']}-\<\dd{}\iota_{\wt e'},X>+\<\dd{} \iota_{\wt e},X'>\ ,
\end{equation}
where $\wt e$ and $\wt e'$ are extensions of $(e,X)$ and $(e',X')$ defined as in the proof of Lemma~\ref{lem:prolong} above, and $\iota_{\wt e}(\xi)=\<\xi,\wt e>$ is the evaluation map. 

The assertion \eqref{eqn:omega_stripped} holds if and only if for every $(\xi,\theta)\in E^\ast\times_M\T^\ast E^\ast$ and every $(e',X')\in\TT{E^\ast}E$ we have
\begin{equation}
\label{eqn:omega_equivalent}
\Omega_E((\vertical{E^\ast}(\theta),\wt{\Lambda_{E^\ast}}(\theta)-\V\xi),(e',X'))=\<\xi,e'>+\<\theta,X'>\ .
\end{equation}
Actually, it is enough to check it only for covectors $\theta=\dd{}\iota_{\wt e}$ and $\theta=(\dual{\bundle})^\ast\dd{} f$ where $\wt e\in\Sec(E)$ and $f\in\Cf(M)$ are arbitrary, as (at a point) such elements span every fiber of $\cotangent{E^\ast}:\T^\ast E^\ast\ra E^\ast$. From \eqref{eqn:omega_evaluation} we get
\begin{align*}
\Omega_E((\wt e,\wt{\Lambda_{E^\ast}}(\dd{}\iota_{\wt e})-\V\xi),(e',X'))=&\iota_{[\wt e,\wt e']}-\<\dd{}\iota_{\wt e'},\wt{\Lambda_{E^\ast}}(\dd{}\iota_{\wt e})-\V\xi>+\<\dd{}\iota_{\wt e},X'>\overset{\eqref{eqn:lambda_formula}}=\\
&\iota_{[\wt e,\wt e']}-\iota_{[\wt e,\wt e']}+\<\xi,e'>+\<\dd{}\iota_{\wt e},X'>=\<\xi,e'>+\<\dd{}\iota_{\wt e},X'>\ ,
\end{align*}
and thus \eqref{eqn:omega_equivalent} holds for $\theta=\dd{}\iota_{\wt e}$. Similarly, since $\T(\dual{\bundle})(X')=\anchor(\wt e')$, we get 
\begin{align*}
\Omega_E((0,\wt{\Lambda_{E^\ast}}((\dual{\bundle})^\ast\dd{}f)-\V\xi),(e',X'))=&\iota_{[0,\wt e']}-\<\dd{}\iota_{\wt e'},\wt{\Lambda_{E^\ast}}((\dual{\bundle})^\ast\dd{}f)-\V\xi>+\<\dd{}\iota_{0},X'>\overset{\eqref{eqn:lambda_formula}}=\\
&\anchor(\wt e')f+\<\xi,e'>=\<\dd{}f,\T(\dual{\bundle})(X')>+\<\xi,e'>=\\
&\<(\dual{\bundle})^\ast\dd{}f,X'>+\<\xi,e'>\ ,
\end{align*}
proving \eqref{eqn:omega_equivalent} for $\theta=(\dual{\bundle})^\ast \dd{}f$. This ends the reasoning. \qed 

\paragraph{A proof of Lemma \ref{lem:L_to_EL}}\label{proof:L_to_EL}

We need to check if for every $a\in E$ the following equality holds $\dd{}\EL(a)=(\T\lambda_L)^\ast\RR E(\dd{}L(a))$. Equivalently, we want to prove that for every $X\in\T_aE$ we have
\begin{equation}
\label{eqn:to_prove}
\<\dd{}\EL(a),X>=\<\RR E(\dd{}L(a)),\T\lambda_L(X)>\ .
\end{equation}
Note that by \eqref{eqn:R_E_legs} we have $\RR E(\dd{} L(a))\in\T_\xi E^\ast$ where $\xi=\vertical E(\dd{} L(a))=\lambda_L(a)$. 
Now, by definition {\new (cf. Subsection~\ref{ssec:preliminaries}) the pair $(\dd{}L(a),\RR E(\dd{} L(a)))\in\T^\ast E\times\T^\ast E^\ast$ acts on $\T(E\times_ME^\ast)\simeq \T E\times_{\T M}\T E^\ast$ as the differential of the canonical pairing $\<\cdot,\cdot>:E\times E^\ast \ra \R$. That is, for any $X\in \T_a E$ and $Y\in \T_\xi E^\ast$ which have the same $\T M$-projection 
$$\<\dd{}L(a),X>+\<\RR E(\dd{}L(a)),Y>=\dd{}\<\cdot,\cdot>(X,Y)\ .$$
Taking $Y=\T\lambda_L(X)$ (check that the $\T M$-projections agree) leads to 
$$\<\RR E(\dd{}L(a)),\T\lambda_L(X)>=\dd{}\<\cdot,\cdot>(X,\T\lambda_L(X))-\<\dd{}L(a),X>\ .$$
Now, as $\EL(a)=\<a,\lambda_L(a)>-L(a)$, the right-hand-side of the above equality clearly equals $\<\dd{}\EL(a),X>$, and hence \eqref{eqn:to_prove} holds.\qed
}

\bibliographystyle{amsalpha}
\addcontentsline{toc}{section}{References}
\bibliography{bibl}

\newcommand{\etalchar}[1]{$^{#1}$}
\providecommand{\bysame}{\leavevmode\hbox to3em{\hrulefill}\thinspace}
\providecommand{\MR}{\relax\ifhmode\unskip\space\fi MR }
\providecommand{\MRhref}[2]{%
  \href{http://www.ams.org/mathscinet-getitem?mr=#1}{#2}
}
\providecommand{\href}[2]{#2}
\begin{thebibliography}{IMMdDS08}

\bibitem[CDLM{\etalchar{+}}06]{CLM_2006}
J.~Cort\'{e}s, M.~De~León, J.~C. Marrero, D.~de~Diego, and E.~Martínez, \emph{A
  survey of {L}agrangian mechanics and control on {L}ie algebroids and
  groupoids}, Int. J. Geom. Meth. Mod. Phys. \textbf{3} (2006), no.~3,
  509--558.

\bibitem[CdLMM09]{Cortes_de_Leon_Marrero_Martinez_2009}
J.~Cort\'{e}s, M.~de~Le\'{o}n, J.~C. Marrero, and E.~Mart\'{i}nez,
  \emph{Non-holonomic {L}agrangian systems on {L}ie algebroids}, Discrete
  Contin. Dyn. Syst. \textbf{24} (2009), 213--271.

\bibitem[Col17]{Colombo_2017}
L.~Colombo, \emph{Second-order constrained variational problems on {L}ie
  algebroids: {A}pplications to optimal control}, J. Geom. Mech. \textbf{9}
  (2017), no.~1, 1--45.

\bibitem[GG08]{GG_2008}
K.~Grabowska and J.~Grabowski, \emph{Variational calculus with constraints on
  general algebroids}, J. Phys. A \textbf{41} (2008), no.~17, 175204, arXiv:
  0712.2766.

\bibitem[GG13]{GG_2013}
\bysame, \emph{Tulczyjew triples: from statics to field theory}, J. Geom. Mech.
  \textbf{5} (2013), no.~4, 445–472.

\bibitem[GGU06]{GGU_2006}
K.~Grabowska, J.~Grabowski, and P.~Urbański, \emph{Geometrical mechanics on
  algebroids}, Int. J. Geom. Meth. Mod. Phys. \textbf{3} (2006), 559--575,
  arXiv: math-ph/0509063.

\bibitem[GR09]{Grabowski_Rotkiewicz_2009}
J.~Grabowski and M.~Rotkiewicz, \emph{Higher vector bundles and multi-graded
  symplectic manifolds}, J. Geom. Phys. \textbf{59} (2009), no.~9, 1285--1305.

\bibitem[GU99]{GU_1999}
J.~Grabowski and P.~Urbański, \emph{Algebroids — general differential calculi
  on vector bundles}, J. Geom. Phys. \textbf{31} (1999), no.~2–3, 111--141.

\bibitem[HM90]{Higgins_Mackenzie_1990}
P.~Higgins and K.~C.~H. Mackenzie, \emph{Algebraic constructions in the
  category of {L}ie algebroids}, J. Algebra \textbf{129} (1990), no.~1,
  194--230.

\bibitem[IMMdDS08]{Iglesias_Marrero_Diego_Sosa_2008}
D.~Iglesias, J.C. Marrero, D.~Mart\'{i}n~de Diego, and D.~Sosa, \emph{Singular
  {L}agrangian systems and variational constrained mechanics on {L}ie
  algebroids}, Dyn. Syst. \textbf{23} (2008), 351--397.

\bibitem[JR14]{MJ_MR_2014}
M.~J\'{o}\'{z}wikowski and M.~Rotkiewicz, \emph{Bundle-theoretic methods for
  higher-order variational calculus}, J. Geom. Mech. \textbf{6} (2014),
  99--120.

\bibitem[JR17]{MJ_MR_aha_2017}
M.~Jóźwikowski and M.~Rotkiewicz, \emph{Higher algebroids via differential
  relations}, arXiv: 1708.03174, 2017.

\bibitem[Kle62]{Klein_1962}
J.~Klein, \emph{Espaces variationnels et m\'{e}canique}, Ann. Inst. Fourier
  \textbf{12} (1962), no.~1, 1--124.

\bibitem[KU99]{Konieczna_Urbanski_1999}
K.~Konieczna and P.~Urbański, \emph{Double vector bundles and duality}, Arch.
  Math. (Brno) \textbf{35} (1999), no.~1, 59--95.

\bibitem[Lib96]{Liberman_1996}
P.~Liberman, \emph{Lie algebroids and mechanics}, Arch. Math. (Brno)
  \textbf{32} (1996), 147--162.

\bibitem[LMM05]{LMM_2005}
M.~de León, J.~C. Marrero, and E.~Martínez, \emph{Lagrangian submanifolds and
  dynamics on {L}ie algebroids}, J. Phys. A \textbf{38} (2005), no.~24, R241.

\bibitem[Mar99]{Martinez_1999}
E.~Mart\'{i}nez, \emph{Geometric formulation of mechanics on {L}ie algebroids},
  Proceedings of the VIII Fall Workshop on Geometry and Physics, Medina del
  Campo, Publicaciones de la RSME, 1999, pp.~209--222.

\bibitem[Mar01]{Martinez_2001}
\bysame, \emph{Lagrangian mechanics on {L}ie algebroids}, Acta Appl. Math.
  \textbf{67} (2001), no.~3, 295--320.

\bibitem[Mar07]{Martinez_2007}
\bysame, \emph{Lie algebroids in classical mechanics and optimal control},
  SIGMA Symmetry Integrability Geom. Methods Appl. \textbf{3} (2007), 17.

\bibitem[Tul76a]{Tulczyjew_1976a}
W.~Tulczyjew, \emph{Les sous-varietes lagrangiennes et la dynamique
  hamiltonienne}, CRAS Paris, Série II \textbf{283} (1976), 15–18.

\bibitem[Tul76b]{Tulczyjew_1976b}
\bysame, \emph{Les sous-varietes lagrangiennes et la dynamique lagrangienne},
  CRAS Paris, Série II \textbf{283} (1976), 675–678.

\bibitem[Wei96]{Weinstein_1996}
A.~Weinstein, \emph{Lagrangian mechanics and groupoids}, Fields Inst. Commun.
  \textbf{7} (1996), 207--231.

\end{thebibliography}
\end{document}